\tikzset{
  big arrow/.style={
    decoration={markings,mark=at position 1 with {\arrow[scale=1.5,#1]{>}}},
    postaction={decorate},
    shorten >=0.4pt},
  big arrow/.default=black}
\newcommand{\ie}{i.e.,\xspace}
\newcommand{\eg}{e.g.,\xspace}
\newcommand{\etal}{et~al.\xspace}
\newcommand{\C}{\ensuremath{\mathbb{C}}}               
\newcommand{\R}{\ensuremath{\mathbb{R}}}               
\newcommand{\Rtwo}{\ensuremath{\R \rule{0.3mm}{0mm}^2}\xspace}      
\newcommand{\Rd}{\ensuremath{\R \rule{0.3mm}{0mm}^d}\xspace}        
\newcommand{\Obda}{W.\,l.\,o.\,g.\@\xspace}
\newcommand{\cgalPackage}[1]{{\emph{#1}\index{CGAL package@\cgal{} package!#1@\emph{#1}}}}
\newcommand{\iiDMinkowskiSumsPackage}{\cgalPackage{2D Minkowski Sums}}
\newcommand{\iiDArrangementsPackage}{\cgalPackage{2D Arrangements}}
\newcommand{\iiDTriangulationsPackage}{\cgalPackage{2D Triangulations}}
\newcommand{\ignore}[1]{}
\newcommand{\cgal}{{\sc Cgal}}
\newcommand{\cpationstyle}{\small\sf }
\def\ifmonospace{\ifdim\fontdimen3\font=0pt }
\def\C++{%
\ifmonospace%
    C++%
\else%
    C\kern-.1667em\raise.30ex\hbox{\smaller{++}}%
\fi%
\spacefactor1000 }
\def\Csharp{%
\ifmonospace%
    C\#%
\else%
    C\kern-.1667em\raise.30ex\hbox{\smaller{\#}}%
\fi%
\spacefactor1000 }
\definecolor{polyALightColor}{rgb}{0.7,0.8,1}
\definecolor{polyBLightColor}{rgb}{1,0.85,0.6}
\definecolor{polyCLightColor}{rgb}{0.85,1,0.3}
\definecolor{polyDLightColor}{rgb}{1,1,0.75}
\definecolor{polyAColor}{rgb}{0.5,0.7,1}
\definecolor{polyBColor}{rgb}{1,0.72,0.225}
\definecolor{polyCColor}{rgb}{0.8,0.9,0.2}
\definecolor{polyDColor}{rgb}{0.8,0.8,0.6}
\definecolor{polyADarkColor}{rgb}{0.2,0.4,0.67}
\definecolor{polyBDarkColor}{rgb}{0.67,0.4,0.12}
\definecolor{polyCDarkColor}{rgb}{0.53,0.6,0.13}
\definecolor{polyDDarkColor}{rgb}{0.53,0.53,0.4}
\definecolor{polyABDarkColor}{rgb}{0.5,0.35,0.4}
\newtheorem{theorem}{Theorem}
\newtheorem{definition}{Definition}
\newtheorem{lemma}{Lemma}
\newtheorem{corollary}{Corollary}
\title{Exact Minkowski Sums of Polygons With Holes}
\date{}
\author[1]{Alon Baram}
\author[1]{Efi Fogel}
\author[1]{Dan Halperin}
\author[2]{\\Michael Hemmer}
\author[2]{Sebastian Morr}
\affil[1]{School of Computer Science, Tel Aviv University, Israel}
\affil[2]{Dept. of Computer Science, TU Braunschweig, Germany}
\begin{document}

\maketitle

\setcounter{footnote}{1}
\footnotetext{Work by E.F.\ and D.H.\ has been supported in part by the Israel
Science Foundation (grant no. 1102/11), by the German-Israeli Foundation
(grant no. 1150-82.6/2011), and by the Hermann Minkowski--Minerva Center for
Geometry at Tel Aviv University.}
\refstepcounter{footnote}
\footnotetext{Work by S.M.\ and M.H.\ has been supported by Google Summer of
Code 2014.}

\begin{abstract}
We present an efficient algorithm that computes the Min\-kow\-ski sum of
two polygons, which may have holes.  The new algorithm is based on the
convolution approach. Its efficiency stems in part from a property for
Minkowski sums of polygons with holes, which in fact holds in any 
dimension: Given two polygons with holes, for each input polygon we
can fill up the holes that are relatively small compared to the other
polygon. Specifically, we can always fill up all the holes of at least
one polygon, transforming it into a simple polygon, and still obtain
exactly the same Minkowski sum. Obliterating holes in the input
summands speeds up the computation of Minkowski sums.

We introduce a robust implementation of the new algorithm, which
follows the Exact Geometric Computation paradigm and thus guarantees
exact results. We also present an empirical comparison of the
performance of Minkowski sum construction of various input examples, 
where we show that the
implementation of the new algorithm exhibits better performance than
several other implementations in many cases. In particular, we compared
the implementation of the new algorithm, an implementation of the
standard \emph{convolution} algorithm, and an implementation of the
\emph{decomposition} approach using various convex decomposition
methods,
including two new methods that handle polygons with
holes---one is based on vertical decomposition and the other is based
on triangulation.

The software has been developed as an extension of the
\iiDMinkowskiSumsPackage{} package of \cgal{}
(Computational Geometry Algorithms Library).
Additional information and supplementary material is available at
our project page \url{http://acg.cs.tau.ac.il/projects/rc}.
\end{abstract}

\section{Introduction}
\label{sec:introduction}
Let $P$ and $Q$ be two point sets in \Rd. The Minkowski sum of $P$ and
$Q$ is defined as $P \oplus Q = \left\{p+q\,|\,p \in P,q \in
Q\right\}$. In this paper we focus on the computation of Minkowski
sums of general polygons in the plane, that is, polygons that may have
holes. However, some of our results also apply to higher dimensions.
Minkowski sums are ubiquitous in many fields \cite{lm-cpq-04},
including robot motion planning \cite{l-rmp-91}, assembly planning
\cite{fh-papit-13}, and computer aided design \cite{ek-osm-99}.



\subsection{Terms, Definition, and Related Work}
\label{ssec:terms}
During the last four decades many algorithms to compute the Minkowski
sum of polygons or polyhedra were introduced. For exact
two-dimensional solutions see, \eg~\cite{fhw-caass-12}.  For
approximate solutions see, \eg~\cite{hmsvz-csaio-99} and
\cite{k-ccsou-93}.  For exact and approximate three-dimensional
solutions see, \eg~\cite{h-emsoe-09}, \cite{ml-gvamc-10},
\cite{l-smcmb-09}, and~\cite{vm-amsapm-06}.

Computing the Minkowski sum of two convex polygons $P$ and $Q$ is
rather easy. As $P \oplus Q$ is a convex polygon bounded by copies of
the edges of $P$ and $Q$ ordered according to their slope, the
Minkowski sum can be computed using an operation similar to merging
two sorted lists of numbers.  If the polygons are not convex, it is
possible to use one of the two following general approaches:

\paragraph{Decomposition}
Algorithms that follow the decomposition approach decompose $P$ and
$Q$ into two sets of convex sub-polygons.  Then, they compute the
pairwise sums using the simple procedure described above. Finally,
they compute the union of the pairwise sums. This approach was first
proposed by Lozano-P\'{e}rez~\cite{l-spcsa-83}. The performance of
this approach heavily depends on the method that computes the convex
decomposition of the input polygons. Flato~\etal~\cite{afg-pdecm-02}
described an implementation of the first exact and robust version of
the decomposition approach, which handles degeneracies. They also
tried different decomposition methods, but none of them handles
polygons with holes.

Ghosh~\cite{g-ucfms-93} introduced \emph{slope diagrams}---a data
structure that was used later on by some of us to construct Minkowski
sums of bounded convex polyhedra in 3D~\cite{fh-eecms-06}.  
Hachenberger~\cite{h-emsoe-09}
constructed Minkowski sums of general polyhedra in 3D.
Both implementations are based on the Computational Geometry
Algorithms Library (\cgal) and follow the Exact Geometric
Computation (EGC) paradigm. 

\def\scaleFactor{0.42}
\def\convA{%
  \begin{tikzpicture}[scale=\scaleFactor]
    \filldraw[color=polyADarkColor,fill=polyALightColor,thick]%
    (1,0)--(0,1)--(-1,0)--(0,-1)--cycle;
  \end{tikzpicture}}
\def\convB{%
  \begin{tikzpicture}[scale=\scaleFactor]
    \filldraw[color=polyBDarkColor,fill=polyBLightColor,thick]%
    (0,0)--(8,0)--(8,6)--(0,6)--cycle;
    \filldraw[color=polyBDarkColor,fill=white,thick](1,1)--(4,5)--(7,1)--cycle;
  \end{tikzpicture}}
\def\convC{%
  \begin{tikzpicture}[scale=\scaleFactor]
    \fill[polyCLightColor]%
    (-1,-0)--(-0,-1)--(8,-1)--(9,-0)--(9,6)--(8,7)--(-0,7)--(-1,6)--cycle;
    \fill[white](5.25,2)--(2.75,2)--(4,3.66667)--cycle;
    \draw[black,thick,big arrow](-1,-0)--(-0,-1);
    \draw[black,thick,big arrow](-0,-1)--(8,-1);
    \draw[black,thick,big arrow](8,-1)--(9,-0);
    \draw[black,thick,big arrow](9,-0)--(9,6);
    \draw[black,thick,big arrow](9,6)--(8,7);
    \draw[black,thick,big arrow](8,7)--(-0,7);
    \draw[black,thick,big arrow](-0,7)--(-1,6);
    \draw[black,thick,big arrow](-1,6)--(-1,-0);
    \draw[black,thick,big arrow](2,1)--(5,5);
    \draw[black,thick,big arrow,dotted](5,5)--(4,6);
    \draw[black,thick,big arrow,dotted](4,6)--(3,5);
    \draw[black,thick,big arrow](3,5)--(6,1);
    \draw[black,thick,big arrow,dotted](6,1)--(7,0);
    \draw[black,thick,big arrow,dotted](7,0)--(8,1);
    \draw[black,thick,big arrow,dotted](8,1)--(7,2);
    \draw[black,thick,big arrow](7,2)--(1,2);
    \draw[black,thick,big arrow,dotted](1,2)--(0,1);
    \draw[black,thick,big arrow,dotted](0,1)--(1,0);
    \draw[black,thick,big arrow,dotted](1,0)--(2,1);
    \node at (4,2.6) {$[0]$};
    \node at (9,7) {$[0]$};
    \node at (4,0.5) {$[1]$};
    \node at (1,1) {$[2]$};
    \node at (7,1) {$[2]$};
    \node at (4,4.9) {$[2]$};
  \end{tikzpicture}}

\begin{figure}[ttp]
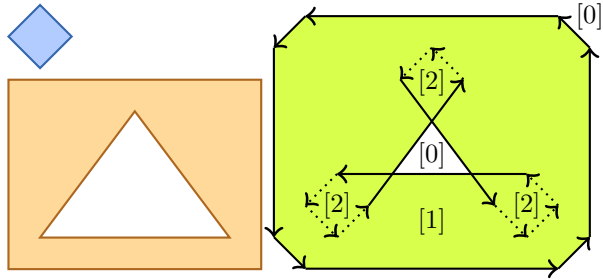

  \setlength{\tabcolsep}{2pt}
  \centering
  \begin{tabular}{lr}
    \convA&\multirow{2}*[20pt]{\convC}\\
    \convB&\\
  \end{tabular}
\caption[The convolutions of a convex and a concave polygons]{
  The convolution of a convex and a non-convex polygon;
  winding numbers are indicated in brackets;
  dotted edges are left out for the reduced convolution.
  }
  \label{fig:onecyc}
\end{figure}

\ignore{
\begin{figure}[ttp]
  \centering
  \begin{tabular}{p{\linewidth}}
    \begin{tabular*}{\linewidth}{l@{\extracolsep{\fill}}r}
      \convA & \convB
    \end{tabular*}\tabularnewline
    \begin{tabular*}{\linewidth}{l@{\extracolsep{\fill}}r}
      (a) The convex polygon. & (b) The concave polygon.\tabularnewline
    \end{tabular*}\tabularnewline
    \multicolumn{1}{c}{\convC}\tabularnewline
    \multicolumn{1}{c}{(c) The convolution and Minkowski sum.}
  \end{tabular}

  \caption[The convolutions of a convex and a concave polygons]{
    The convolution of a convex and a non-convex polygon;
    winding numbers are indicated in brackets;
    dotted edges are not part of the reduced convolution.
  }
  \label{fig:onecyc}
\end{figure}
}

\paragraph{Convolution}
Let $V_P = \left( p_0, \ldots, p_{m-1} \right)$ and $V_Q = \left( q_0,
\ldots, q_{n-1} \right)$ denote the vertices of the input polygons $P$
and $Q$, respectively. Assume that their boundaries wind in a
counterclockwise order around their interiors. The \emph{convolution}
of these two polygons, denoted $P * Q$, is a collection of line
segments of the form\footnote{Addition of vertex indices is carried
  out modulo $n$ for $P$ and modulo $m$ for $Q$.}
$[p_i + q_j, p_{i+1} + q_j]$, when the vector $\overrightarrow{p_i
  p_{i+1}}$ lies counterclockwise in between $\overrightarrow{q_{j-1}
  q_j}$ and $\overrightarrow{q_j q_{j+1}}$ and, symmetrically, of
segments of the form $[p_i + q_j, p_i + q_{j+1}]$, when the vector
$\overrightarrow{q_j q_{j+1}}$ lies counterclockwise in between
$\overrightarrow{p_{i-1} p_i}$ and $\overrightarrow{p_i p_{i+1}}$.

According to the \emph{Convolution Theorem} stated in $1983$ by
Guibas~\etal~\cite{grs-kfcg-83}, the convolution $P * Q$ of two
polygons $P$ and $Q$ is a superset of the boundary of the Minkowski
sum $P \oplus Q$. The segments of the convolution form a number of
closed (possibly self-intersecting) polygonal curves called
\emph{convolution cycles}. The set of points having a nonzero winding
number with respect to the convolution cycles comprise the Minkowski
sum $P \oplus Q$.\footnote{Informally speaking, the winding number of
  a point ${p \in \Rtwo}$ with respect to some planar curve $\gamma$
  is an integer number counting how many times does $\gamma$ wind in a
  counterclockwise orientation around $p$.} However, this theorem has
not been completely proven. In the introduction of the thesis of
Ramkumar~\cite{r-ttcta-98}, there are statements about the correctness
of the \emph{Convolution Theorem}. Some of these statements are still
given without proofs.

Wein~\cite{w-eecpm-06} implemented the standard convolution algorithms
for simple polygons. He computed the winding number for each face in
the arrangement induced by the convolution cycles and used it to
determine whether the face is part of the Minkowski sum or not; see
Figure~\ref{fig:onecyc}. Wein's implementation is available in
\cgal~\cite{cgal:w-rms2-15}, and as such, it follows the EGC paradigm.


Kaul~\etal~\cite{kos-cmsrp-91} observed that a segment $[p_i + q_j,
  p_{i+1} + q_j]$ (resp.\ $[p_i + q_j, p_i + q_{j+1}]$) cannot
possibly contribute to the boundary of the Minkowski sum if $q_j$
(resp.\ $p_j$) is a reflex vertex (see dotted edges in
Figure~\ref{fig:onecyc}).  The remaining subset of convolution
segments, the \emph{reduced convolution}, is still a superset of the
Minkowski sum boundary, but the idea of winding numbers can not be
applied any longer as there are no closed cycles anymore.  Instead,
Behar and Lien~\cite{bl-frmsu-11}, first identify faces in the
arrangement of the reduced convolution that may represent holes (based
on proper orientation of all boundary edges of the face).  Thereafter,
they check whether such a face is indeed a proper hole by selecting a
point $x$ inside the face and performing a collision detection of $P$
and $x\oplus-Q$.
Their implementation exhibits faster running time than Wein's implementation.
However, although it uses advanced multi-precision arithmetic,
it does not handle some degenerate cases correctly.
The method was also extended to three dimensions~\cite{l-smcmb-09}.

Milenkovic and Sacks~\cite{ms-mcms-07} defined the \emph{Monotonic
  Convolution}, which is another superset of the Minkowski sum
boundary. They show that this set defines cycles and induces winding
numbers, which are positive only in the interior of the Minkowski sum.

\ignore{
Convolution based methods typically outperform decomposition methods
in those cases where input polygons cannot be
easily decomposed, as
the number of segments in the convolution is
usually smaller than the number of segments that constitute the
boundaries of the sub-sums when using the decomposition approach.
}

\subsection{Our Results}
\label{ssec:results}
We present an efficient algorithm that computes the Minkowski sum of
two polygons, which may have holes. The new algorithm is a variant of
the algorithm proposed by Behar and Lien~\cite{bl-frmsu-11}, which
computes the reduced convolution set. In our new algorithm, the
initial set of filters proposed in~\cite{bl-frmsu-11} is enhanced by
the removal of complete holes in the input. This enhancement reduces
the size of the reduced convolution set even further. The enhancement
is backed up by a theorem, the proof of which is also presented; see
Section~\ref{sec:filtration}. Moreover, we show that at least one of
the input polygons can always be made simple (before applying the
convolution). These latter results are applicable to any dimension and are 
independent of the used approach. In addition, roughly speaking, we
show that every boundary cycle of the Minkowski sum is caused by
exactly one boundary cycle of each summand; see
Section~\ref{sec:filtration}. It implies that we can compute the
convolution of each pair of boundary cycles of the summands
separately.  This result is also applicable to any dimension and it is
independent of the used approach. However, applying it to the
decomposition approach requires the ability to handle unbounded
polygons.

We introduce an implementation of the new algorithm.
%
We also introduce implementations of two new convex decomposition
methods that handle polygons with holes as input---one is based on
vertical decomposition and the other is based on triangulation. These
two methods can be directly applied to compute the Minkowski sum of
polygons with holes via decomposition.  All our implementations are
robust and handle degenerate cases.
%

We present an empirical comparison of all the implementations above
and existing implementations; see Section~\ref{sec:experiments}. We
show that the implementation of our new algorithm, which computes the
reduced convolution set, exhibits better performance than all other
implementations in many cases.

\section{Filtering Out Holes}
\label{sec:filtration}
The fundamental observation of the convolution theorem is that only
points on the boundary of $P$ and $Q$ can contribute to the boundary
of $P\oplus Q$. 
Specifically, the union of the segments in the convolution $P*Q$, 
as a point set, is a super-set of the union of the segments 
of the boundary of $P\oplus Q$. 

The idea behind the reduced convolution method is to filter out
segments of $P*Q$ that can not possibly contribute to the boundary of
$P\oplus Q$ using a local criterion; see Section~\ref{ssec:terms}.  In
this section we introduce a global criterion. We show that if a hole
in one polygon is relatively small compared to the other polygon, the
hole is irrelevant for the computation of $P\oplus Q$; 
see Figure~\ref{fig:path_and_hole} for an illustration. 
Thus, we can ignore all segments in $P*Q$ that are induced by the hole when
computing $P\oplus Q$. It implies that the hole can be removed (that
is, filled up) before the main computation starts, regardless of 
the approach that one uses to compute the Minkowski sum.

\begin{figure}[htb]
\center
\includegraphics[width=0.7\columnwidth]{./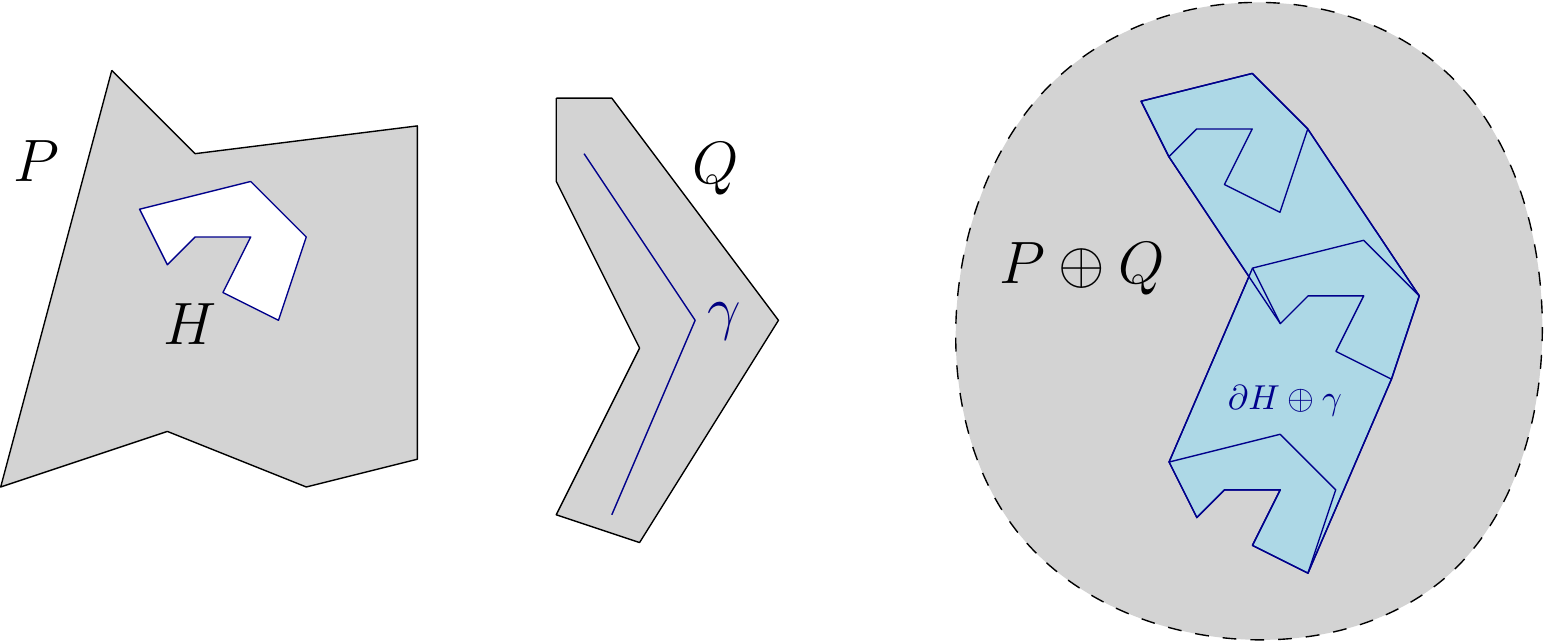}
\caption{\cpationstyle%
  A small hole $H$ is irrelevant for the computation of $P\oplus Q$
  as adding $\partial H$ and $\gamma \subset Q$ fills up any 
  potential hole in $P\oplus Q$ related to $H$.
\label{fig:path_and_hole}}
\end{figure}

\begin{definition}
  A hole $H$ of polygon $P$ leaves a trace in $P \oplus Q$, if there
  exists a point $r = p+q \in \partial(P \oplus Q)$, such that $p \in
  \partial H$ and $q \in \partial Q$. We say that $r$ is a trace of
  $H$.  Conversely, we say that a hole $H$ is irrelevant for the
  computation of $P \oplus Q$ if it does not leave a trace at all.
\end{definition}

\begin{lemma}\label{lemmaA}
  If $H$ leaves a trace in $P \oplus Q$ at a point $r$, then $r$ is on
  the boundary of a hole $\tilde H$ in $P \oplus Q$.
\end{lemma}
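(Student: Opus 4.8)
The plan is to pin down exactly which connected component of the complement of $P\oplus Q$ is incident to $r$, and show it is bounded. The single tool I would rely on is the elementary description of the complement of a Minkowski sum: since $P$ and $Q$ are compact, a point $x$ lies outside $P\oplus Q$ if and only if the translated reflected copy $x-Q:=\{x-q:q\in Q\}$ is disjoint from $P$, and in that case $\operatorname{dist}(x,P\oplus Q)=\operatorname{dist}(x-Q,P)>0$. I would also use that $Q$ (being a polygon, possibly with holes) is connected and has nonempty interior, and that $P\oplus Q$ is again a compact polygonal region, so $\Rtwo\setminus(P\oplus Q)$ has finitely many components — one unbounded, the rest being the holes $\tilde H$ of $P\oplus Q$ — and likewise $\Rtwo\setminus P$ has one unbounded component $U_P$ and bounded components that are exactly the holes of $P$, among them $H$.

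First I would reduce the statement. Since $r\in\partial(P\oplus Q)$ and $P\oplus Q$ is closed and polygonal, some component $C$ of $\Rtwo\setminus(P\oplus Q)$ has $r$ on its boundary. If I can show that $C$ is bounded, then $C$ is a hole $\tilde H$ of $P\oplus Q$ with $r\in\partial\tilde H$, and the lemma follows.

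The heart of the argument is to track the copies $x-Q$ as $x$ ranges over $C$. For every $x\in C$ the set $x-Q$ is connected and disjoint from $P$, hence lies in a single component $\kappa(x)$ of $\Rtwo\setminus P$. Because for $x\notin P\oplus Q$ the copy $x-Q$ is at positive distance from $P$, and a small translation moves $x-Q$ by less than that distance while keeping it overlapping its original position, the assignment $\kappa$ is locally constant on $\Rtwo\setminus(P\oplus Q)$, hence constant on the connected set $C$; call its value $K$. Now choose $x_n\in C$ with $x_n\to r$: then $x_n-q\in K$ for all $n$ and $x_n-q\to r-q=p$, so $p\in\overline K$. Since $p\in\partial H\subseteq P$ we have $p\notin K$, so $p\in\partial K\cap\partial H$. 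For a (non-degenerate) polygon with holes the only component of $\Rtwo\setminus P$ whose closure contains a given point of $\partial H$ is $H$ itself, so $K=H$; in particular $K$ is bounded. Finally, $x\in C$ gives $x-Q\subseteq K$, hence $x\in K+Q$, so $C\subseteq H\oplus Q$, which is bounded — completing the reduction. (As a sanity check, this forces $H$ to contain a reflected copy of $Q$, which is precisely the "$H$ is not small'' flavor expected from the hypothesis that $H$ leaves a trace.)

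I expect the delicate point to be exactly the inference $p\in\overline K\ \Rightarrow\ K=H$ (or at least $K$ bounded): it is immediate for a clean polygon with holes, where hole boundaries are pairwise disjoint and disjoint from the outer boundary, but for degenerate inputs — a hole boundary touching the outer boundary, or two hole boundaries touching, at $p$ — one must argue more carefully about which complementary component of $P$ the copies $x_n-Q$ occupy. I would handle this by exploiting that $q\in\partial Q$: near $p$ the copy $r-Q$ lies, up to the touching point, strictly on one side, so for an appropriate choice of the approximating sequence $x_n$ the copies $x_n-Q$ enter the hole $H$ rather than $U_P$, which again yields $K=H$. Everything else — the complement characterization, local constancy of $\kappa$, and boundedness of $H\oplus Q$ — is routine.
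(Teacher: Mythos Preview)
Your argument is correct but takes a genuinely different, more elaborate route than the paper. The paper fixes the single point $q\in\partial Q$ from the hypothesis and argues directly that any $r'\notin P\oplus Q$ close to $r$ has $r'-q\notin P$ while $r'-q$ is near $p\in\partial H$, hence $r'-q\in H$; thus $r'$ is enclosed by the translated curve $\partial H+q\subseteq P\oplus Q$ and therefore lies in a bounded complementary component of $P\oplus Q$. You instead track the entire set $x-Q$ as $x$ ranges over the complementary component $C$ incident to $r$, and use the locally-constant map $\kappa$ to pin down which component of $\Rtwo\setminus P$ contains it. The payoff is that your argument actually establishes the stronger inclusion $x-Q\subseteq H$ for every $x\in C$, i.e.\ $C\subseteq\bigcap_{y\in Q}(H+y)$, which is exactly the content of Lemma~\ref{lemmaB}; so you have folded both lemmas into a single proof. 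The cost is the extra machinery of $\kappa$ and its continuity, which the paper sidesteps by working with a single translate. Your discussion of the degenerate step $p\in\overline K\Rightarrow K=H$ is also more careful than the paper's, which makes the same non-degeneracy assumption tacitly when it concludes that $p'=r'-q$ must lie in $H$.
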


\begin{proof}
  Consider the point $r = p+q$, 
  which is on the boundary of $P \oplus Q$, 
  such that $p\in \partial H$ and $q \in \partial Q$. 
  Since the polygons are closed, for every neighborhood of $r$
  there exists a point $r' \not \in P \oplus Q$, see Figure \ref{fig:trace}. 
  Consequently, its corresponding point $p'=r'-q$, 
  which is in the neighborhood of $p$ must be in $H$. 
  Thus, $r'$ must be enclosed by $\partial H \oplus q$, 
  implying that $r'$ is inside a hole of $P \oplus Q$. 
%
\end{proof}

\begin{figure}[htb]
  \center
  \includegraphics[width=0.7\columnwidth]{./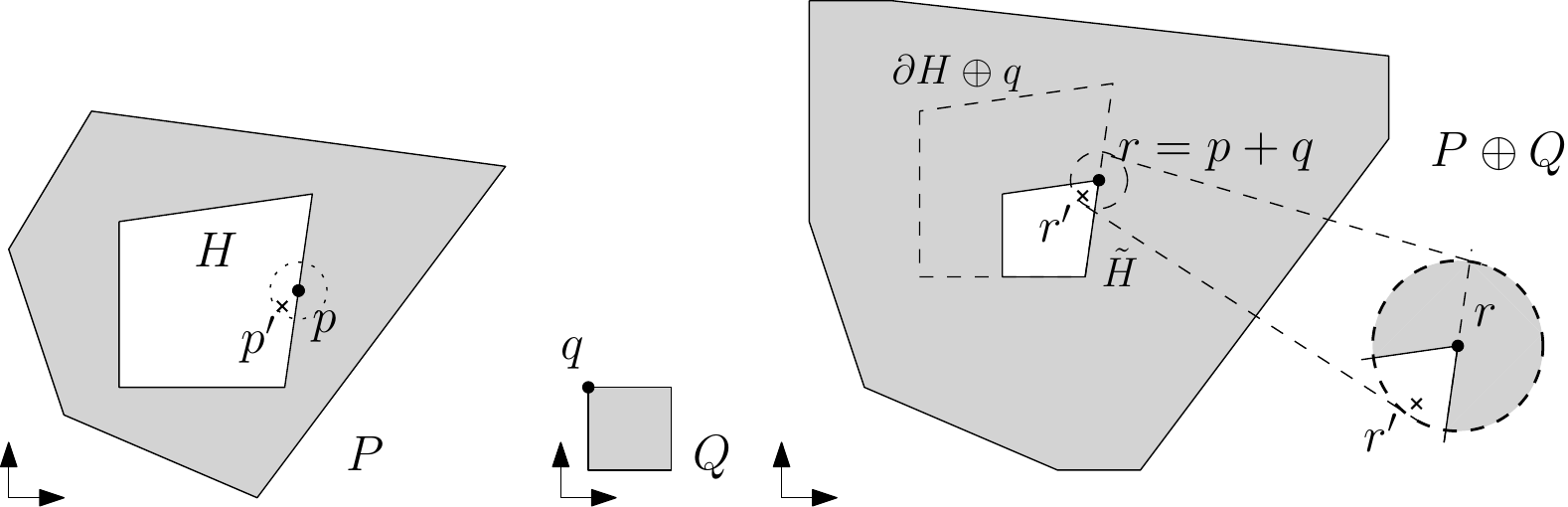}
  \caption{\cpationstyle%
    Hole $H$ leaves a trace in $P \oplus Q$ at point $r$, 
    which must be on the boundary of some hole $\tilde H$ in $P \oplus Q$; 
    see Lemma~\ref{lemmaA}.
    \label{fig:trace}}
\end{figure}

\begin{lemma}\label{lemmaB}
  Let $\tilde H$ be a hole in $P \oplus Q$ that contains a point $r
  = p + q \in \partial \tilde H$, such that $p \in \partial H$ and
  $q \in \partial Q$; that is, $r$ is a trace of $H$. Then $\forall z
  \in \tilde H$ and $\forall y \in Q$, it must hold that $z \in H
  \oplus y$. 
  In other words, $\tilde H \subseteq \bigcap_{\forall y \in Q} H \oplus y$.

\end{lemma}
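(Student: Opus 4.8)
The plan is to argue by contradiction: suppose there is a point $z \in \tilde H$ and a point $y \in Q$ with $z \notin H \oplus y$, and derive a contradiction with the assumption that $\tilde H$ is a hole (a bounded connected component of the complement) of $P \oplus Q$. The starting observation is Lemma~\ref{lemmaA}'s underlying geometry: since $r = p+q$ with $p \in \partial H$, a neighborhood of $r$ meets $H \oplus q$, so locally near $r$ the hole $\tilde H$ is ``carved out'' by the translated hole boundary $\partial H \oplus q$; more is true, namely that near $r$ the set $\tilde H$ coincides with $(H \oplus q)$ restricted to the complement of $P \oplus Q$. I would first make precise that $\tilde H \subseteq H \oplus q$ for the \emph{specific} translate $q$ appearing in the trace: any point of $\tilde H$ can be joined to $r$ by a path inside $\tilde H$, hence inside $\R^2 \setminus (P\oplus Q)$; pushing this path back by $-q$ gives a path in $\R^2 \setminus P$ starting arbitrarily close to $p \in \partial H$, so it starts inside $H$, and since $H$ is itself a connected component of $\R^2 \setminus P$ the whole pushed-back path stays in $H$. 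Thus $\tilde H - q \subseteq H$, i.e. $\tilde H \subseteq H \oplus q$.

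The harder step is to upgrade this from the single translate $q$ to \emph{every} $y \in Q$. Here I would use that $Q$ is connected and that $P \oplus Q = \bigcup_{y \in Q} (P \oplus y)$, together with a continuity/connectedness argument on the parameter $y$. Fix $z \in \tilde H$. Consider the set $Y_z = \{\, y \in Q : z \in H \oplus y \,\}$; I want to show $Y_z = Q$. By the previous paragraph $q \in Y_z$, so $Y_z \neq \emptyset$. The key point is that for $y \in Q$ we have $z \notin P \oplus y$ (because $z \in \tilde H$ lies outside $P \oplus Q \supseteq P \oplus y$), so $z - y \notin P$; hence $z - y$ lies in some connected component of $\R^2 \setminus P$, which is either the unbounded component or one of the holes of $P$. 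As $y$ varies continuously over the connected set $Q$, the point $z-y$ moves continuously in $\R^2 \setminus P$, so it stays within a single component — and at $y = q$ that component is $H$. Therefore $z - y \in H$ for all $y \in Q$, which is exactly $z \in H \oplus y$ for all $y \in Q$. Taking the intersection over $y \in Q$ gives the displayed inclusion $\tilde H \subseteq \bigcap_{y \in Q} H \oplus y$.

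The main obstacle I anticipate is the topological bookkeeping in the continuity argument: one must be careful that $\R^2 \setminus P$ really does decompose into the holes of $P$ plus one unbounded face (true for a polygon with holes), that the map $y \mapsto z - y$ genuinely cannot jump between components along a path in $Q$ (it cannot, since a path is connected and its image avoids $\partial P$), and that the trace condition pins down the correct component to be $H$ rather than the unbounded one or another hole. A secondary subtlety is handling boundary points: $z \in \tilde H$ should be taken in the open hole, or one argues with closures at the end; and the path from $z$ to $r$ inside $\tilde H$ should avoid $\partial(P \oplus Q)$, which is why working in the open complement matters. None of these is deep, but they are the places where the argument must be stated carefully rather than waved through.
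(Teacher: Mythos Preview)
Your proposal is correct and follows essentially the same two-step route as the paper: first use connectivity of $\tilde H$ (together with $\partial H \oplus q \subset P \oplus Q$) to obtain $\tilde H \subseteq H \oplus q$, then use connectivity of $Q$ and the fact that $z-y \notin P$ for all $y \in Q$ to conclude that $z-y$ stays in the single component $H$ of $\R^2 \setminus P$. The only cosmetic difference is that the paper picks an explicit path $\gamma$ from $q$ to a hypothetical bad $y_0$ and locates a crossing point of $P$, whereas you phrase the same step as ``the image of the connected set $Q$ under $y \mapsto z-y$ lies in one component of $\R^2\setminus P$''; these are equivalent here, and you have already flagged the relevant boundary/open-versus-closed subtleties that both arguments need.
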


\begin{proof}
  As in Lemma \ref{lemmaA}, there is a point $r'$ in the neighborhood of
  $r$, which is enclosed by $\partial H \oplus q$. Furthermore, there
  exists a continuous path from $r'\in\tilde H$ to any $z\in\tilde H$, 
  which means that every $z\in\tilde H$ is also enclosed by 
  $\partial H \oplus q$,  or in other words: $z \in H  \oplus q$.

   Now, assume for contradiction that there is a point $y_0 \in Q$, for
  which $z$ is not in $H \oplus y_0$. Consider the continuous
  path $\gamma$ that connects $q$ and $y_0$ within $Q$.  Observe that
  $z \in H \oplus q$ and $z \notin H \oplus y_0$ are equivalent to $z
  - q \in H$ and $z - y_0 \notin H$, respectively.  This means that $z
  - y_0$ is either in the unbounded face, or in some other hole in
  $P$.  Now observe that the path $z \oplus (-\gamma)$ connects $z -
  q$ and $z - y_0$.  
  Thus, since $\gamma$ is continuous, there must be a point $y_0' \in \gamma \subset
  Q$, for which $z - y_0' \in P$. 
  Hence, $z \in P \oplus y_0'$, which implies $z \in P \oplus Q$---a
  contradiction.
\end{proof}

\begin{corollary}\label{cor:one-trace}
  Let $\tilde H$ be a hole in $P \oplus Q$ with $r \in \partial \tilde H$
  being a trace of $H$. 
 Then $\forall s \in \partial \tilde H$ it holds that $s$ is a trace of $H$. 
\end{corollary}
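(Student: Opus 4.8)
The plan is to show directly that every $s\in\partial\tilde H$ can be written as $s=p'+q'$ with $p'\in\partial H$ and $q'\in\partial Q$, i.e.\ that $s$ is a trace of $H$; Lemma~\ref{lemmaB} will be the tool that forces the first summand to lie on $\partial H$ rather than elsewhere on $\partial P$.

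First I would establish two routine facts about the point $s$. Since $\tilde H$ is an open connected component of the complement of the (closed) set $P\oplus Q$, a neighborhood argument shows that a boundary point of $\tilde H$ cannot itself lie outside $P\oplus Q$ (otherwise it would belong to the same open component), so $s\in P\oplus Q$, and in fact $s\in\partial(P\oplus Q)$. In particular $s=a+b$ for some $a\in P$ and $b\in Q$. Next, in any such decomposition one must have $b\in\partial Q$: if $b$ were in the interior of $Q$, then $a+\mathrm{int}(Q)$ would be an open subset of $P\oplus Q$ containing $s$, contradicting $s\in\partial(P\oplus Q)$.

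Then I would apply Lemma~\ref{lemmaB} with $y=b$: it gives $\tilde H\subseteq H\oplus b$, and passing to closures (translation is a homeomorphism) yields $s\in\overline{\tilde H}\subseteq\overline H\oplus b$, that is, $a=s-b\in\overline H$. Combining this with $a\in P$ finishes the argument, because the open hole $H$ is disjoint from $P$ while $\partial H\subseteq\partial P\subseteq P$, so $\overline H\cap P=\partial H$; hence $a\in\partial H$. Together with $b\in\partial Q$ from the previous step, the decomposition $s=a+b$ exhibits $s$ as a trace of $H$, which is exactly the assertion.

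I expect the only delicate point to be the boundary bookkeeping. Lemma~\ref{lemmaB} is phrased for interior points of $\tilde H$, whereas we need it at $s\in\partial\tilde H$, so one has to take closures with care and be explicit about the open/closed conventions for $H$, $\tilde H$, $P$, and $Q$ in order to justify the single identity $\overline H\cap P=\partial H$ on which everything rests. Once those conventions are fixed, the proof is a short chain of set inclusions and requires no geometric idea beyond Lemmas~\ref{lemmaA} and~\ref{lemmaB}.
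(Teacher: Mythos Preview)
Your proposal is correct and follows the same approach as the paper: both derive the corollary from Lemma~\ref{lemmaB}. The paper's proof is a single line that merely restates the inclusion $\tilde H \subseteq \bigcap_{y \in Q} H \oplus y$ from Lemma~\ref{lemmaB} and leaves the reader to extract the conclusion; your version fills in exactly those missing steps (writing $s=a+b$, forcing $b\in\partial Q$, taking closures to place $a\in\overline H\cap P=\partial H$), so it is in fact a more complete rendering of the same argument.
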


\begin{proof}
  By Lemma~\ref{lemmaB} 
  $\tilde H \subseteq \bigcap_{\forall y \in Q} H \oplus y$
\end{proof}

\begin{theorem}\label{theorem1}
  Let $H$ be a closed hole in polygon $P$. 
  $H$ is irrelevant for the computation of $P\oplus Q$ 
  iff 
  there is a path contained in polygon $Q$ that does not fit 
  under any translation in~$-H$. 
\end{theorem}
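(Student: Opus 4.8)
The plan is to recast the theorem in terms of the whole polygon $Q$ instead of a path, and then to prove the recast statement using Lemmas~\ref{lemmaA} and~\ref{lemmaB} together with a connectedness argument. The recast: I claim ``some path contained in $Q$ has no translate lying inside $-H$'' is equivalent to ``$Q$ itself has no translate lying inside the hole $-H$'', that is, to the nonexistence of a vector $u$ with $u\oplus(-Q)\subseteq\mathrm{int}(H)$. Indeed, a hole is bounded by a single simple closed curve, so the open hole region is simply connected and its complement is connected; by the Jordan curve theorem a closed polygonal curve translates inside the open hole exactly when the region it encloses does (a path from an enclosed point to infinity must cross the curve). Applying this to the outer boundary of $Q$, and observing that filling the holes of $Q$ only enlarges it, gives the equivalence. (This is the point where ``fits in $-H$'' has to be read as ``fits inside the open hole''; with a naive closed reading the statement is false, \eg if $Q=-H$ and $H$ is the only hole of $P$, where $Q$ trivially ``fits'' yet a direct check shows $H$ leaves no trace.) So it suffices to prove: $H$ leaves a trace in $P\oplus Q$ if and only if $u\oplus(-Q)\subseteq\mathrm{int}(H)$ for some $u$.

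For the forward implication I would chain the earlier lemmas. If $H$ leaves a trace at $r$, then Lemma~\ref{lemmaA} puts $r$ on the boundary of a hole $\tilde H$ of $P\oplus Q$, and Lemma~\ref{lemmaB} gives $\tilde H\subseteq\bigcap_{y\in Q}(H\oplus y)$. Since a hole is a nonempty face, fix any $z\in\tilde H$; then $z\notin P\oplus Q$, so $z\oplus(-Q)$ is a connected subset of the open set $\R^2\setminus P$ and hence lies in a single connected component of it. That component contains the nonempty set $z\oplus(-Q)$, which by Lemma~\ref{lemmaB} is contained in $H$; so the component meets $H$, and the only component of $\R^2\setminus P$ that meets $H$ is $\mathrm{int}(H)$ (the unbounded component and the interiors of the other holes are disjoint from $H$). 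Thus $z\oplus(-Q)\subseteq\mathrm{int}(H)$, and $u:=z$ works.

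For the converse, assume $u\oplus(-Q)\subseteq\mathrm{int}(H)$. Since $\mathrm{int}(H)$ is disjoint from $P$, the set $u\oplus(-Q)$ misses $P$, so $u\notin P\oplus Q$; and since every $u-q$ lies strictly inside the outer boundary of $P$, the point $u$ lies strictly inside a translate of that boundary curve, and such a translate is contained in $P\oplus Q$. Hence the connected component $\tilde H$ of $u$ in $\R^2\setminus(P\oplus Q)$ is bounded, that is, a hole of $P\oplus Q$. The map sending $z$ to the component of $\R^2\setminus P$ containing $z\oplus(-Q)$ is locally constant on $\tilde H$ and equals $\mathrm{int}(H)$ at $z=u$, so $z\oplus(-Q)\subseteq\mathrm{int}(H)$ for every $z\in\tilde H$; consequently $\overline{\tilde H}\subseteq H\oplus q$ for every $q\in Q$. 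Now take any $s\in\partial\tilde H$; then $s\in\partial(P\oplus Q)$, and since only boundary points of $P$ and $Q$ contribute to $\partial(P\oplus Q)$, we have $s=p+q$ with $p\in\partial P$ and $q\in\partial Q$. From $s\in\overline{\tilde H}\subseteq H\oplus q$ we get $p=s-q\in H$, hence $p\in\partial P\cap H=\partial H$; so $s$ is a trace of $H$ and $H$ is relevant. (Corollary~\ref{cor:one-trace} falls out of the same computation, since every point of $\partial\tilde H$ turns out to be a trace.)

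The hardest part will be the bookkeeping with the connected components of $\R^2\setminus P$ and of $\R^2\setminus(P\oplus Q)$: showing that a translated copy of $-Q$ contained in a hole of $P$ actually lies in the \emph{interior} of a single such hole, that the identification of that hole is locally and hence globally constant along a connected hole, and that the holes on the two sides correspond. Getting the open-versus-closed convention right, as noted above, is part of the same issue. Everything else is a direct consequence of Lemmas~\ref{lemmaA} and~\ref{lemmaB}, of elementary plane topology (the Jordan curve theorem and the behavior of connected sets under translation), and of the fact recalled at the start of this section that only boundary points of $P$ and $Q$ can contribute to $\partial(P\oplus Q)$.
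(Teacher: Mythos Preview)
Your argument is correct and, for the direction ``$H$ leaves a trace $\Rightarrow$ $Q$ fits in $-H$'', essentially coincides with the paper's: both invoke Lemma~\ref{lemmaA} to locate a hole $\tilde H$ and Lemma~\ref{lemmaB} to conclude that any interior point $z\in\tilde H$ satisfies $z-q\in H$ for all $q\in Q$, i.e., $-Q$ translates into $H$ by $z$. Your additional connectedness step pinning $z\oplus(-Q)$ down to $\mathrm{int}(H)$ rather than merely $H$ is there only to support your open reading of ``fits''; the paper does not make this distinction.

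The converse is where you genuinely diverge. The paper's proof is a two--line observation: if every path in $Q$ fits in $-H$, then $Q$ itself fits, say $x\oplus Q\subseteq -H$; then $-x\notin P\oplus Q$ while $-x\in(P\cup H)\oplus Q$, so filling $H$ changes the sum and $H$ is declared relevant. You instead carry the argument all the way back to the trace definition: you show $u$ lies in a bounded complementary component $\tilde H$, propagate the inclusion $z\oplus(-Q)\subseteq\mathrm{int}(H)$ across $\tilde H$ by a locally--constant/connectedness argument, and then for any $s\in\partial\tilde H$ write $s=p+q$ with $p\in\partial P$, $q\in\partial Q$ and deduce $p\in\partial H$. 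This is more work, but it has the merit of matching the formal definition of ``leaves a trace'' exactly (the paper tacitly identifies ``relevant'' with ``filling $H$ changes $P\oplus Q$''), and it incidentally re--derives Corollary~\ref{cor:one-trace}. Your discussion of the open--versus--closed reading of ``fits in $-H$'' is also a point the paper leaves implicit.
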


\begin{proof}
  We first show that $H$ is irrelevant for the computation of $P\oplus Q$ 
  if there is a path $\gamma \subset Q$ 
  that does not fit under any translation in~$-H$.
  Assume for contradiction
  that $H$ leaves a trace in $P \oplus Q$; that is, there is an $r = p
  + q \in \partial (P \oplus Q)$, such that $p\in\partial H$ and $q
  \in \partial Q$. 
  By Lemma~\ref{lemmaA}, the point $r$ is on the boundary of 
  a hole $\tilde H$ in $P \oplus Q$. 
  By Lemma~\ref{lemmaB}, for any point $x \in
  \tilde H$ it must hold that $x \in H \oplus y$ $\forall y\in Q$.
  Specifically, it must hold $\forall y \in \gamma \subset Q$.  This
  is equivalent to $y \in (x \oplus -H)$ for all $y\in \gamma$,
  stating that $\gamma$ fits into $-H$ under some translation---a
  contradiction.

  \medskip 

  \noindent Conversely, if there is no path that does not fit into $-H$ then 
  all paths contained in $Q$ fit in $-H$. 
  Thus, also $Q$ itself fits in $-H$ under some translation $x$ with 
  $x \oplus Q \subseteq -H$.  
  In this case $x+q \in -H$ for all $q \in Q$, which is equivalent to $-x \in
  H \oplus q$ for all $q \in Q$. 
  This implies that $-x \notin P \oplus Q$, whereas 
  $-x \in (P\cup H) \oplus Q$, that is, H is relevant for $P \oplus Q$.
\end{proof}

\begin{corollary}\label{cor:hole_filter}
  If the closed axis-aligned bounding box $B_Q$ of $Q$ does not fit
  under any translation in the open axis-aligned bounding box
  $\mathring B_H$ of a hole $H$ in $P$, then $H$ does not have a trace
  in $P \oplus Q$.
\end{corollary}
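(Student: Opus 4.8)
The plan is to establish the contrapositive: if $H$ \emph{does} leave a trace in $P\oplus Q$, then the closed box $B_Q$ admits a translation into the open box $\mathring B_H$. So assume $H$ has a trace at some $r=p+q$ with $p\in\partial H$ and $q\in\partial Q$. First I would apply Lemma~\ref{lemmaA} to obtain a hole $\tilde H$ of $P\oplus Q$ with $r\in\partial\tilde H$, and then Lemma~\ref{lemmaB}, which yields the ``erosion'' containment $\tilde H\subseteq\bigcap_{y\in Q}(H\oplus y)$. The extra ingredient is that $\tilde H$, being a hole of the closed polygonal region $P\oplus Q$, is two-dimensional and hence has nonempty interior. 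I would fix an interior point $z$ of $\tilde H$ and a $\delta>0$ small enough that the four points $z\pm(\delta,0)$ and $z\pm(0,\delta)$ all lie in $\tilde H$.

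By Lemma~\ref{lemmaB}, each of these four points $z'$ satisfies $z'-y\in H$ for every $y\in Q$, \ie $z'-Q\subseteq H\subseteq B_H$. Projecting the two horizontally displaced points $z\pm(\delta,0)$ onto the $x$-axis: the $x$-extent of $z'-Q$ equals $\mathrm{width}(B_Q)$, and as $z'$ runs over these two points this extent must stay inside the $x$-range of $B_H$ while being shifted by $2\delta$; comparing endpoints forces $\mathrm{width}(B_Q)\le\mathrm{width}(B_H)-2\delta<\mathrm{width}(B_H)$. The vertically displaced pair $z\pm(0,\delta)$ gives $\mathrm{height}(B_Q)<\mathrm{height}(B_H)$ in the same way. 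Since both side lengths of the closed box $B_Q$ are now strictly smaller than those of $B_H$, a suitable translation places $B_Q$ inside the \emph{open} box $\mathring B_H$, contradicting the hypothesis. Hence $H$ leaves no trace in $P\oplus Q$, which is the assertion.

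I do not anticipate a real obstacle; the one point that needs care -- and the reason the statement must involve $\mathring B_H$ rather than $B_H$ -- is the open/closed bookkeeping in the last step. Lemmas~\ref{lemmaA} and~\ref{lemmaB} on their own only give $\mathrm{width}(B_Q)\le\mathrm{width}(B_H)$ and $\mathrm{height}(B_Q)\le\mathrm{height}(B_H)$ (indeed $B_Q$ may share its dimensions with $B_H$ and $H$ still be irrelevant), so the full strength of the corollary rests on exploiting that $\tilde H$ contains a small square and not merely a point. A natural alternative is to route the argument through Theorem~\ref{theorem1} by exhibiting a path in $Q$ that spans the full width (or height) of $B_Q$, but that demands the identical open/closed attention -- in particular in the boundary case of equal dimensions -- so arguing directly from Lemmas~\ref{lemmaA} and~\ref{lemmaB} seems cleanest.
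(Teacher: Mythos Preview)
Your argument is correct and takes a different route from the paper. The paper's three-line proof goes through Theorem~\ref{theorem1}: assuming \obda that $B_Q$ fails to fit into $\mathring B_H$ in the $x$-direction, it takes a path $\gamma\subset Q$ joining two $x$-extreme points of $\partial Q$ and asserts that $\gamma$ does not fit into $-H$ under any translation (``as it does not fit into $\mathring B_H$''). You instead bypass Theorem~\ref{theorem1} and work directly from Lemmas~\ref{lemmaA} and~\ref{lemmaB}, exploiting that the hole $\tilde H\subset P\oplus Q$ is genuinely two-dimensional: perturbing an interior point $z$ by $\pm\delta$ along each axis yields the \emph{strict} inequalities $\mathrm{width}(B_Q)<\mathrm{width}(B_H)$ and $\mathrm{height}(B_Q)<\mathrm{height}(B_H)$, whence $B_Q$ fits in $\mathring B_H$.

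The paper's approach buys brevity; yours buys rigor at the boundary. In the equal-width case the paper's chosen path $\gamma$ has $x$-extent exactly $\mathrm{width}(-H)$, and since $-H$ is closed the claim that $\gamma$ ``obviously does not fit into $-H$'' is not immediate from the bounding-box comparison alone---your $\pm\delta$ perturbation is precisely what manufactures the strict inequality needed for the \emph{open} box $\mathring B_H$. So the open/closed concern you flag in your final paragraph about the Theorem~\ref{theorem1} route is well placed, and your direct argument handles it cleanly.
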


\begin{proof}
  \Obda assume that $B_Q$ does not fit into $\mathring B_H$ with respect to the
  $x$-direction. Consider the two extreme points on $\partial Q$ in
  that direction and connect them by a closed path $\gamma$, which
  obviously does not fit into $-H$, as it does not fit into $\mathring B_H$.
\end{proof}

\noindent

\begin{theorem}\label{theorem2}
Let $P$ and $Q$ be two polygons with holes and let $P'$ and 
$Q'$ be their filtered versions, that is, with holes filled
up according to Corollary~\ref{cor:hole_filter} with $P\oplus Q = P'\oplus Q'$. 
Then, at least $P'$ or $Q'$ is a simple polygon.
\end{theorem}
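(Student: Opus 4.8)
The plan is to argue by contradiction. Suppose that after the filtering of Corollary~\ref{cor:hole_filter} neither $P'$ nor $Q'$ is simple, so that $P'$ still contains a hole $H_P$ and $Q'$ still contains a hole $H_Q$. I will derive a contradiction purely by comparing the side lengths of a handful of axis-aligned bounding boxes; write $w_x(\cdot)$ for the extent of a box in the $x$-direction.

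First I would record two elementary facts. (i) Filling holes never moves the outer boundary of a polygon, so $B_{P'}=B_P$ and $B_{Q'}=B_Q$; hence the filtering criterion of Corollary~\ref{cor:hole_filter} may be read off the original boxes $B_P$ and $B_Q$. (ii) Any hole of a polygon lies inside the region enclosed by that polygon's outer boundary, so $B_{H_P}\subseteq B_P$ and $B_{H_Q}\subseteq B_Q$, and in particular $w_x(B_{H_P})\le w_x(B_P)$ and $w_x(B_{H_Q})\le w_x(B_Q)$.

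Now the substantive step: translating the survival of $H_P$ and $H_Q$ into \emph{strict} inequalities. Because $H_P$ was not filtered, $B_Q$ must fit under some translation inside the \emph{open} box $\mathring B_{H_P}$, and fitting inside an open box forces strict containment in each coordinate direction; hence $w_x(B_Q) < w_x(B_{H_P})$. Symmetrically, survival of $H_Q$ gives $w_x(B_P) < w_x(B_{H_Q})$. Combining these with (ii),
\[
  w_x(B_Q) < w_x(B_{H_P}) \le w_x(B_P) < w_x(B_{H_Q}) \le w_x(B_Q),
\]
a contradiction. Therefore at least one of $P'$, $Q'$ has no hole and is simple.

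The proof is short, and the one step I expect to require care --- indeed the crux of the argument --- is precisely where the hypothesis is used: the \emph{openness} of the bounding box in Corollary~\ref{cor:hole_filter}. It is exactly the two strict inequalities $w_x(B_Q) < w_x(B_{H_P})$ and $w_x(B_P) < w_x(B_{H_Q})$ that let the chain of inequalities close into a contradiction; with a closed box in both places one would only obtain $w_x(B_Q) \le w_x(B_Q)$, and the argument collapses. Everything else is routine bookkeeping about bounding boxes (and the degenerate case in which a summand has no holes at all is trivially fine and is subsumed by the contradiction setup).
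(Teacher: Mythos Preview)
Your proof is correct and follows essentially the same approach as the paper. Both arguments rest on the same two ingredients: that a hole's bounding box sits inside its polygon's bounding box, and that the openness in Corollary~\ref{cor:hole_filter} forces a strict width inequality, so two boxes cannot each strictly fit inside the other. The paper phrases this directly (``either $B_Q \not\subset \mathring B_P$ or $B_P \not\subset \mathring B_Q$'') and concludes that all holes of one summand are filtered at once, whereas you argue by contradiction via an explicit chain of $x$-widths for two surviving holes; these are the same idea at different levels of unpacking.
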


\begin{proof}
Note that if $B_Q$ does not fit in the open axis-aligned bounding box
$\mathring B_P$ of $P$, it cannot fit in the bounding box of any hole
in~$P$, implying that all holes of $P$ can be ignored. Since for any
two bounding boxes either $B_Q \not \subset \mathring B_P$ or $B_P
\not \subset \mathring B_Q$ holds, we need to consider the holes of at
most one polygon.
\end{proof}

Consequently, we can remove all holes whose bounding boxes are, in
$x$- or $y$-direction, smaller than, or as large as, the bounding box
of the other polygon, as an initial phase of all methods. With fewer
holes, convex decomposition results in fewer pieces. Moreover, when
all holes of a polygon become irrelevant, one can choose a
decomposition method that handles only simple polygons instead of a
decomposition method that handles polygons with holes, which is
typically more costly. As for the convolution approach, the
intermediate arrangements become smaller, speeding up the algorithm.

\ignore{
\begin{definition}
A hole $H$ of polygon $P$ is \emph{irrelevant} to the Minkowski sum
$P\oplus Q$ if no point on $\partial H$ contributes to a point on
$partial(P\oplus Q)$.
\end{definition}

\todo[inline]{I am not happy with the following version of the proof
  as $\tilde H$ is not motivated clearly. Specifically, why would a
  boundary of H not contribute to the outer boundary of the MS?}

\begin{proof}
Assume to the contrary that there is a hole $H$ in $P$ that is
relevant (namely, not irrelevant) to $P\oplus Q$ but there is a path
$\gamma\in Q$ that does not fit inside $-H$ under any
translation. Since $H$ is relevant there must be a hole $\tilde H$
in $\partial H\oplus Q$.  Let $x$ be a point inside $\tilde H$,
then Then $x\in{\it int}(H)\oplus y\ \forall y \in Q$.  Specifically,
it must hold $\forall y \in \gamma \subset Q$.  This is equivalent to
$y \in (x \oplus -H)$ for all $y\in \gamma$, stating that $\gamma$
fits into $-H$ under some translation---a contradiction.
\end{proof}
}

\ignore{
\begin{theorem}\label{theorem1}
  Let $H$ be a closed hole in polygon $P$. If there is a path
  contained in polygon $Q$ that does not fit under any translation
  in~$-H$, then $H$ is irrelevant for the computation of $P\oplus Q$.
\end{theorem}

\begin{proof}
  Let $\gamma$ be a path contained in polygon $Q$, such that it does
  not fit under any translation in~$-H$.  
  Assume for contradiction
  that $H$ leaves a trace in $P \oplus Q$; that is, there is an $r = p
  + q \in \partial (P \oplus Q)$, such that $p\in\partial H$ and $q
  \in \partial Q$. By Lemma~\ref{lemmaA}, $r$ is on the boundary of 
  a hole $\tilde H$ in $P \oplus Q$. By Lemma~\ref{lemmaB}, 
  for any point $x \in
  \tilde H$ it must hold that $x \in H \oplus y$ $\forall y\in Q$.
  Specifically, it must hold $\forall y \in \gamma \subset Q$. This
  is equivalent to $y \in (x \oplus -H)$ for all $y\in \gamma$,
  stating that $\gamma$ fits into $-H$ under some translation---a
  contradiction.
\end{proof}

\noindent The other direction also holds:

\begin{theorem}\label{theorem2}
  Let $H$ be a hole in polygon $P$. If all paths contained in polygon
  $Q$ fit in $-H$ under some translation, then $H$ is relevant for the
  computation of $P\oplus Q$; that is, $P \oplus Q \ne P' \oplus Q$
  with $P' = P \cup H$.
\end{theorem}

\begin{proof}
  If all paths contained in $Q$ fit in $-H$, the path along the outer
  boundary of $Q$ does too.  This means that $Q$ itself fits in $-H$
  under some translation $x$ with $x \oplus Q \subseteq -H$.  In this
  case $x+q \in -H$ for all $q \in Q$, which is equivalent to $-x \in
  H \oplus q$ for all $q \in Q$. This implies that $-x \notin P \oplus
  Q$. But at the same time, $-x$ is inside each possible translation
  of $P'$, so $-x \in P \oplus Q$.
\end{proof}
}

\section{Implementation}
\label{sec:implementation}
The software has been developed as part of the
\iiDMinkowskiSumsPackage{} package of \cgal{}~\cite{cgal:w-rms2-15},
and it uses other components of \cgal~\cite{cgal:eb-15}.
As such, it is written in \C++ and rigorously adheres to the
generic-programming paradigm the EGC paradigms. In the following we
provide some details about each one of the new implementations.

\subsection{Reduced Convolution}
\label{ssec:impl:rc}
We compute the reduced convolution set of segments filtering out
features that cannot possibly contribute to the boundary of the
Minkowski sum (see Section~\ref{ssec:terms}) and in particular
complete holes (see Section~\ref{sec:filtration}). Then, we construct
the arrangement induced by the reduced convolution
set.\footnote{Currently, we use a single arrangement and do not
  separate segments that originate from different boundary cycles in the
  summands (exploiting Corollary~\ref{cor:one-trace}). We plan to apply this
  enhancement in the near future.} Finally, we traverse the
arrangement and extract the boundary of the Minkowski sum. We apply
two different filters to identify valid holes in the Minkowski sum:
(i) We ignore any face in the arrangement the outer boundary of which
forms a cycle that is not properly oriented, as suggested
in~\cite{bl-frmsu-11}.  (ii) We ignore any face $f$, such that $(-P
\oplus x)$ and $Q$ collide, where $x \in f$ is a sampled point inside
$f$, as suggested in~\cite{kos-cmsrp-91}. We use axis-aligned bounding
box trees to expedite the collision tests. After applying these two
filters, only segments that constitute the Minkowski sum boundary
remain.

\subsection{Decomposition}
\label{ssec:impl:dec}
Vertical decomposition~\cite{h-a-04} (a.k.a. trapezoidal
decomposition) and triangulation~\cite{b-tmg-04} have been extensively
used ever since they have been independently introduced a long time
ago. We provide a brief overview of these two structures for
completeness and explain how they are used in our implementations.

\begin{figure}[!thb]
  \centering
  \subfloat[][]{\label{fig:decomp:polygon}%
    \begin{tikzpicture}[scale=0.5]
    \filldraw[color=polyADarkColor,fill=polyALightColor,thick]%
      (0,0)--(6,0)--(6,6)--(0,6)--cycle;
    \filldraw[color=polyADarkColor,fill=white,thick]%
      (1,3)--(3,5)--(5,3)--(3,1)--cycle;
    \end{tikzpicture}}\quad
  \subfloat[][]{\label{fig:decomp:vd}%
    \begin{tikzpicture}[scale=0.5]
    \filldraw[color=polyBDarkColor,fill=polyBLightColor,thick]%
      (0,0)--(6,0)--(6,6)--(0,6)--cycle;
    \filldraw[color=polyBDarkColor,fill=white,thick]%
      (1,3)--(3,5)--(5,3)--(3,1)--cycle;
    \draw[black,thick,<->](1,0)--(1,6);
    \draw[black,thick,<-](3,0)--(3,1);
    \draw[black,thick,->](3,5)--(3,6);
    \draw[black,thick,<->](5,0)--(5,6);
    \end{tikzpicture}}\quad
  \subfloat[][]{\label{fig:decomp:tri}%
    \begin{tikzpicture}[scale=0.5]
    \filldraw[color=polyCDarkColor,fill=polyCLightColor,thick]%
      (0,0)--(6,0)--(6,6)--(0,6)--cycle;
    \filldraw[color=polyBDarkColor,fill=white,thick]%
      (1,3)--(3,5)--(5,3)--(3,1)--cycle;
    \draw[black,thick](0,0)--(1,3);
    \draw[black,thick](0,0)--(3,1);
    \draw[black,thick](6,0)--(5,3);
    \draw[black,thick](6,0)--(3,1);
    \draw[black,thick](6,6)--(5,3);
    \draw[black,thick](6,6)--(3,5);
    \draw[black,thick](0,6)--(1,3);
    \draw[black,thick](0,6)--(3,5);
  \end{tikzpicture}}
  \caption{\cpationstyle Convex decomposition.
    \protect\subref{fig:decomp:polygon} A polygon with holes.
    \protect\subref{fig:decomp:vd} Vertical decomposition of the polygon in \protect\subref{fig:decomp:polygon}.
    \protect\subref{fig:decomp:tri} Triangulation of the polygon in \protect\subref{fig:decomp:polygon}.}
  \label{fig:decomp}
\end{figure}

Vertical decomposition for a planar subdivisions is the partition of
the (already subdivided) plane into a finite collection of pseudo
trapezoids. Each pseudo trapezoid is either a trapezoid that has
vertical sides, or a triangle (which is a degenerate trapezoid). Given
a polygon with holes, we obtain the decomposition as follows: At every
vertex of the polygon, we extend a ray upward if it does not
escape the polygon, until either another vertex or an edge is
hit. Similarly, we extend a ray downward; see
Figure~\ref{fig:decomp:vd}. In our implementation we exploit the
vertical decomposition functionality provided by the \cgal{} package
\iiDArrangementsPackage{}~\cite{cgal:wfzh-a2-15}.

A Delaunay triangulation for a set of points in a plane is the
partition of the plane into triangles, such that no point in the input
is inside the circumcircle of any triangle in the triangulation.  A
constrained Delaunay triangulation is a generalization of the Delaunay
triangulation that forces certain required segments into the
triangulation. Given a polygon with holes we obtain the constrained
Delaunay triangulation confined to the given polygon and provide the
polygon edges as constraints; see Figure~\ref{fig:decomp:tri}.  In our
implementation, we use the
\iiDTriangulationsPackage{}~\cite{cgal:y-t-15} \cgal{} package.

\ignore{
\subsection{Performance}
In most cases, the reduced convolution method consumes less time than
the full convolution method, as the induced arrangement has typically
fewer cells. 
However, in degenerate cases with many holes in the
Minkowski sum, the full convolution method is preferable over the
reduced convolution method, as it avoids the costly
collision-detection tests. 
The decomposition methods consume more time
than the convolution methods in almost all cases. During the
application of the vertical decomposition new vertices are
constructed. The coordinates of the newly constructed vertices tend to
have large bit lengths. This could be devastating when the EGC
paradigm is in use. Triangulation could yield a number of triangles
that is much larger than the number of pseudo trapezoids of the
vertical decomposition. A large number of convex pieces increases the
pressure on the subsequent union step of the decomposition
method. While the time consumption of the union step heavily depends
on the output of the proceeding convex decomposition step, the two steps
are decoupled. Any optimization applied to the union step can
significantly improve the overall performance.
}

\section{Experiments}
\label{sec:experiments}
We have conducted our experiments on families of randomly generated
simple and general polygons from
AGPLib~\cite{art-gallery-instances-page}; examples are depicted in
Figure~\ref{fig:rand_simple} and~\ref{fig:rand_nonsimple},
respectively.  All experiments were run on an \emph{Intel Core 2 Duo
  P9600} CPU clocked at~2.53~GHz with~4~GB of RAM.  For each instance
size the diagrams in the figures show an average over~10 runs on
different input. Every run was allowed 20 minutes of CPU time 
and aborted when it did not finish within this limit.

\begin{figure}[thb]
  \centering
  \subfloat[][]{\label{fig:rand_simple}%
    \includegraphics[width=3cm,height=3cm]{./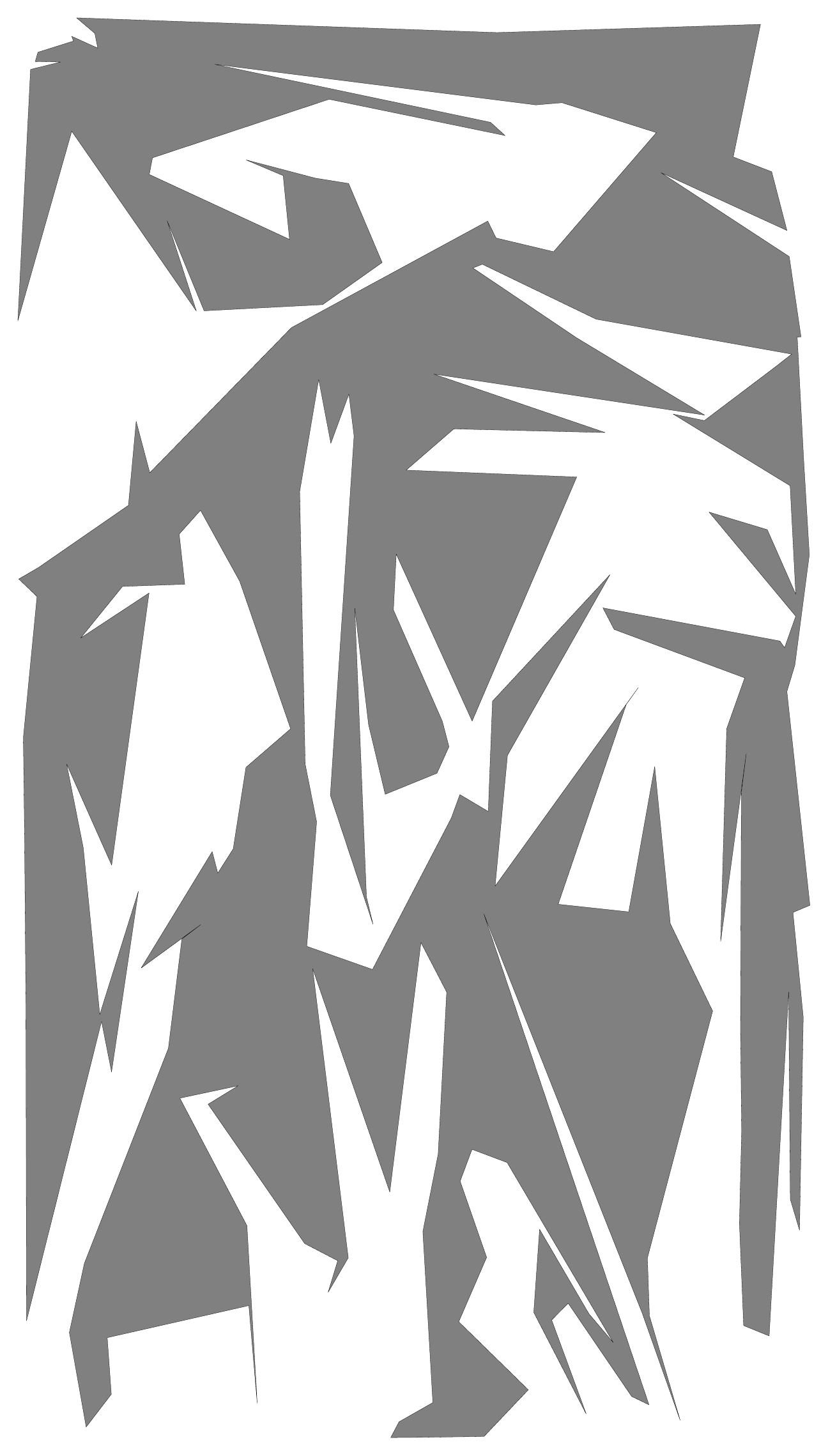}}\hfil
  \subfloat[][]{\label{fig:rand_nonsimple}%
  \includegraphics[width=3cm,height=3cm]{./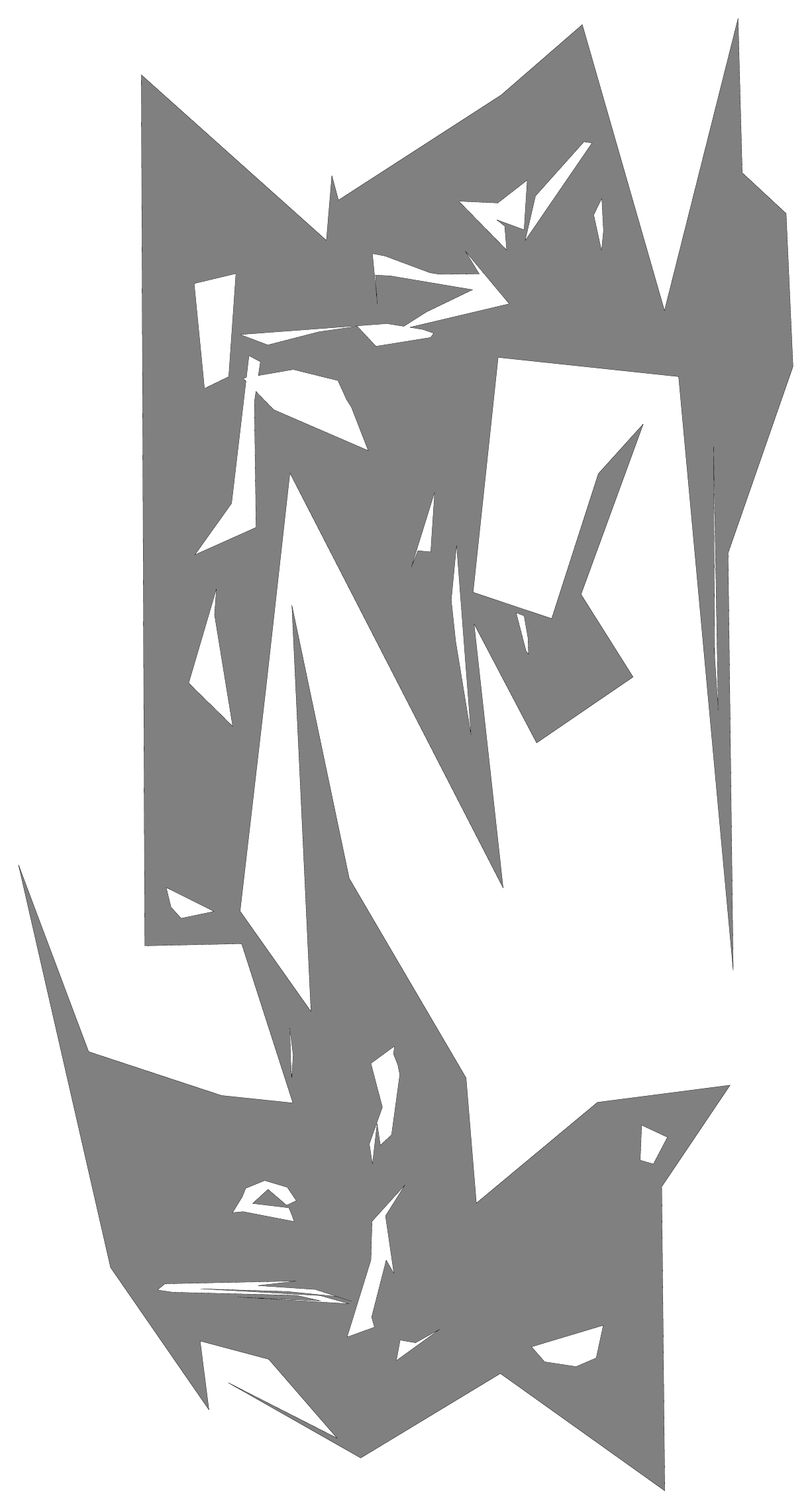}}
  \caption{\cpationstyle Randomly generated polygons: (a) simple polygon with 200 vertices, and
  (b) general with 200 vertices and 20 holes.}
\end{figure}

First, we compared the running time of the implementations of all
methods for simple polygons available in \cgal{} (for details, see
\cite[Section~9.1.2]{fhw-caass-12}), the new
implementations, and Behar and Lien's implementation; see
Figure~\ref{fig:simple_simple}.  
The reduced convolution method consumed about
ten times less time than the full convolution method for large
instances, whereas the decomposition methods were the fastest for
instances larger than~150 vertices. \todo[inline]{why?}

Secondly, we compared the running time of the implementations of the
three new methods (\ie~the reduced convolution (RC), the triangular
decomposition (TD), and the vertical decomposition (VD)) and Behar and Lien's
implementation on instances of general polygons with $n$ vertices and
$n$/10 holes; see Figure~\ref{fig:pwh_pwh}. For each pair of polygons,
one was scaled down by a factor of~1000, to avoid the effect of the
hole filter in this experiment.
For all executions, the reduced convolution method
consumed significantly less time than the two decomposition
methods. Behar and Lien's implementation generally performs worse than
our reduced convolution method.

In order to demonstrate the effect of the hole filter, 
we compared the running time of the implementations above fed
with a square of varying size (see the horizontal axis in
Figure~\ref{fig:growing_square-no-filter} and
~\ref{fig:growing_square}) and with randomly generated
polygons having~2000 vertices and 200~holes.  
Without the hole filter the running time of the reduced convolution 
method increases as the square grows due to
an increase of the complexity of the intermediate arrangement. 
Behar and Lien's implementation exhibited constant running time, 
as it performs pairwise intersection testing.
When applying the hole filter to our methods,
the reduced convolution method consumed less time than all other methods.  
The two diagrams clearly show the impact of filtering holes.


\begin{figure}[thb]
    \centering
    \subfloat[][]{\label{fig:letter_m}%
    \includegraphics[height=2.5cm]{./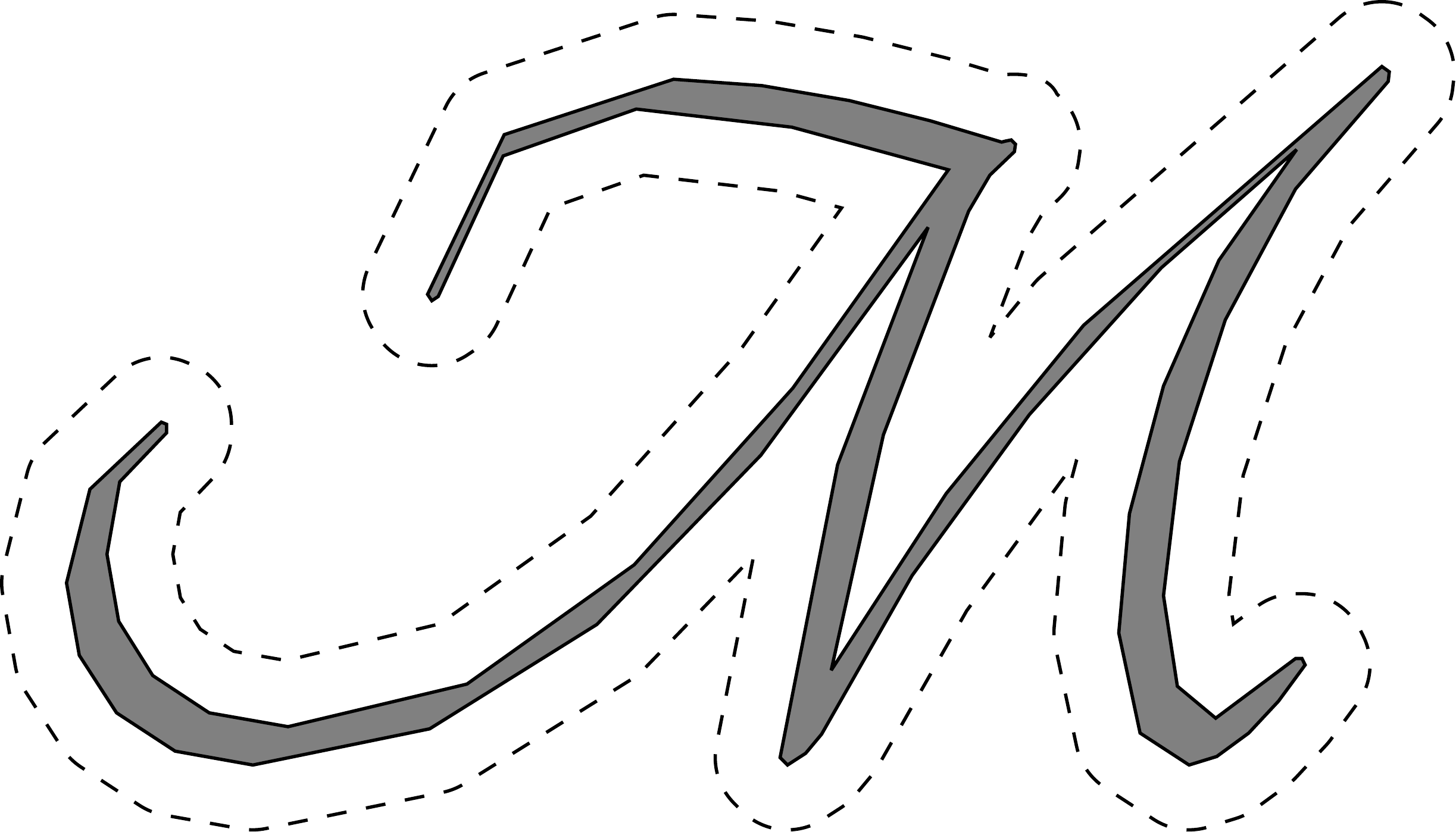}}\hfil
    \subfloat[][]{\label{fig:letter_a}%
    \includegraphics[height=2.5cm]{./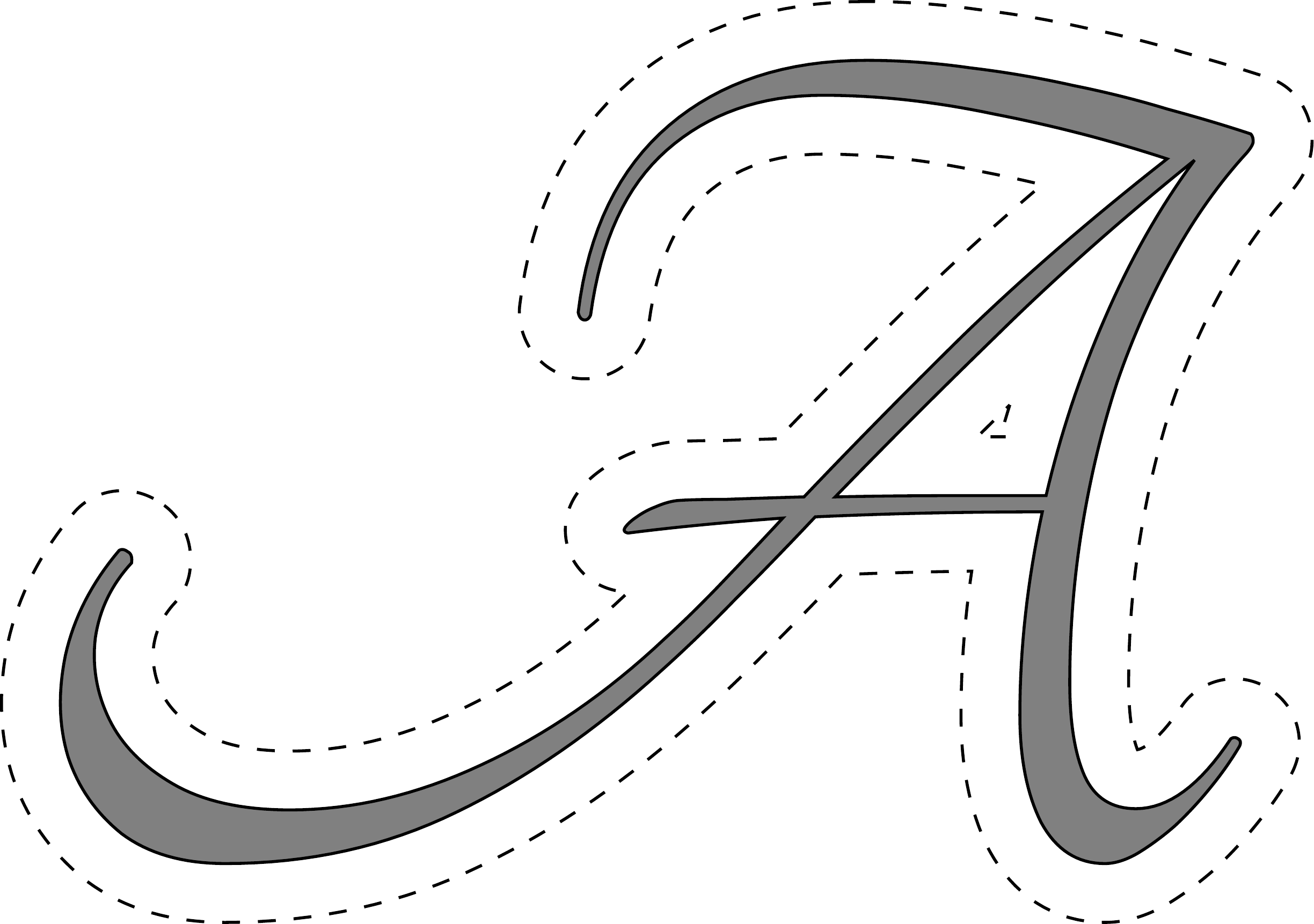}}
    \caption{\cpationstyle Letters from the font \emph{Tangerine} used for the real-world benchmark, displayed with their offset versions. (a) Lowest-resolution “M” with 75 vertices (b) Highest-resolution “A” with 8319 vertices. \label{fig:letter}}
\end{figure}

\begin{figure}[htb]
\center
  \subfloat[][MS of simple polygons;]{\label{fig:simple_simple}
    \includegraphics[width=0.43\columnwidth]{./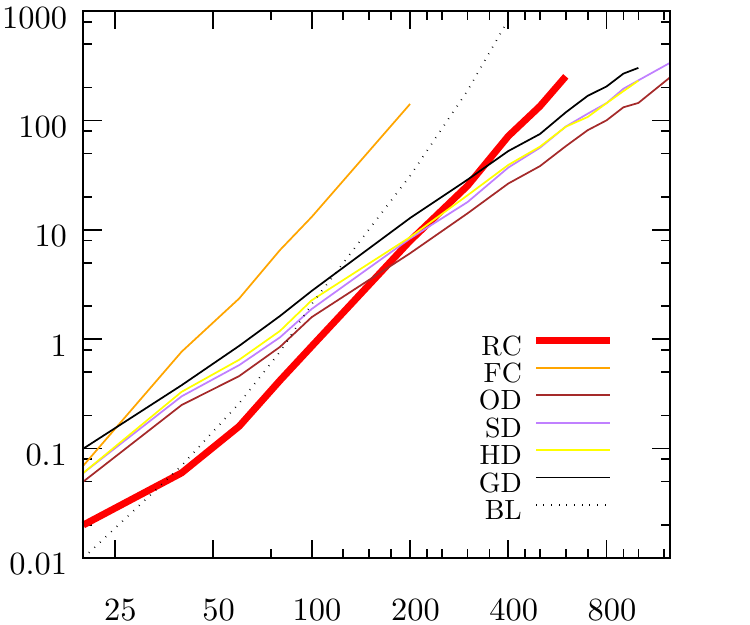}}
  \hspace{0.1cm}
  \subfloat[][MS of general polygons.]{\label{fig:pwh_pwh}
    \includegraphics[width=0.43\columnwidth]{./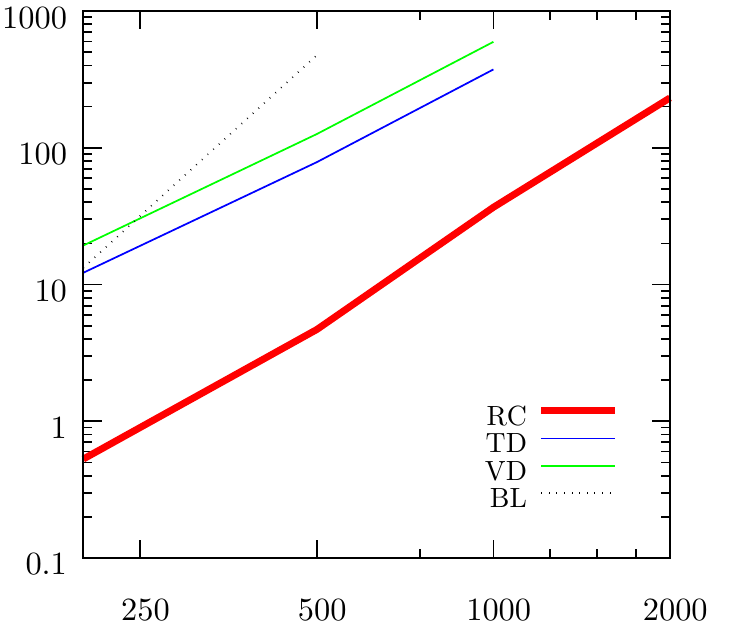}}\\
  \subfloat[][MS of general polygon (200 holes, 2000 vertices) and growing square ($x$-axis)---without hole filter]{\label{fig:growing_square-no-filter}
    \includegraphics[width=0.43\columnwidth]{./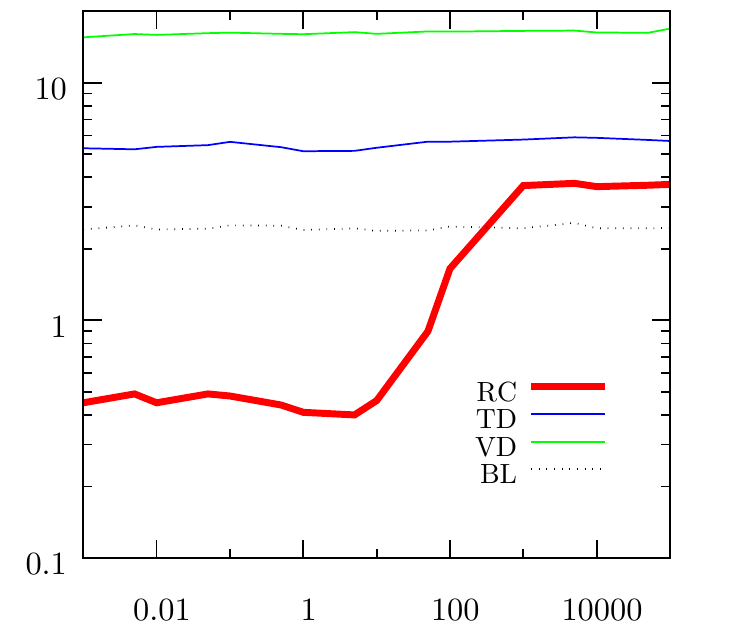}}
  \hspace{0.1cm}
  \subfloat[][MS of general polygon (200 holes, 2000 vertices) and growing square ($x$-axis)---with hole filter]{\label{fig:growing_square}
    \includegraphics[width=0.43\columnwidth]{./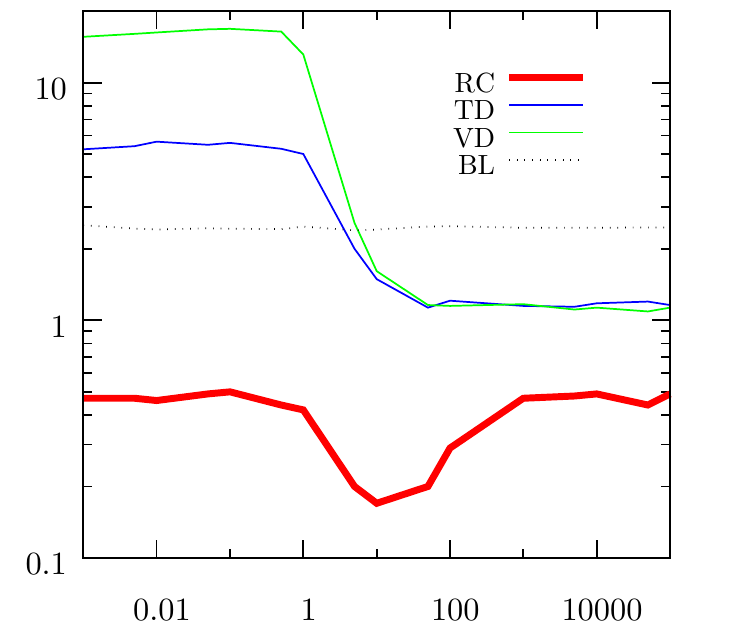}}\\
    \subfloat[][MS of a fixed-size circle and an “M” with varying vertex count ($x$-axis).]{\label{fig:m}
    \includegraphics[width=0.43\columnwidth]{./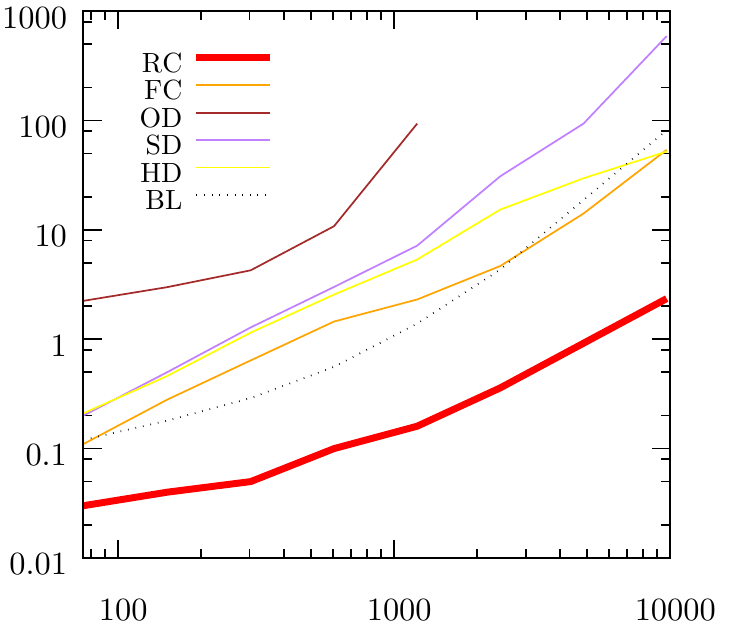}}
  \hspace{0.1cm}
  \subfloat[][MS of a fixed-size circle and an “A” with varying vertex count ($x$-axis).]{\label{fig:a}
    \includegraphics[width=0.43\columnwidth]{./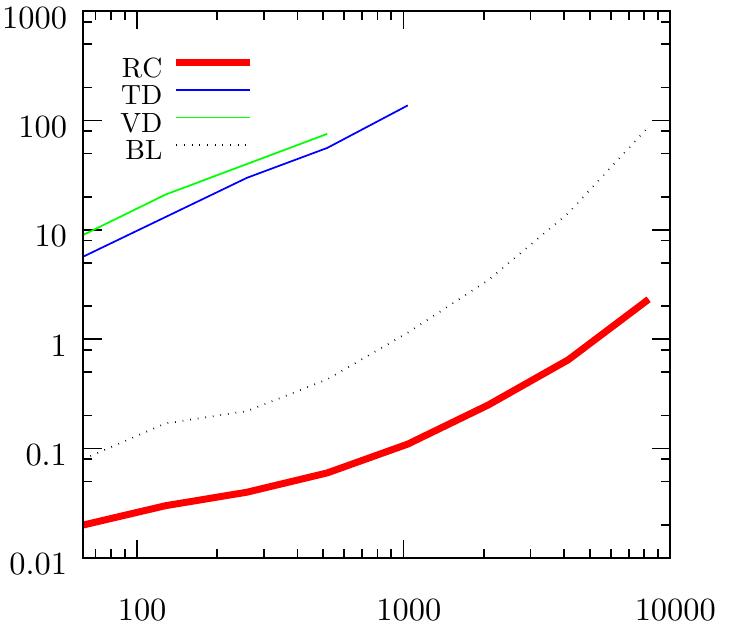}}\\
  \caption{\cpationstyle  
    Time in seconds for different methods to compute Minkowski sums for two polygons.  
    $x$-axis: \#vertices of each input polygon, if not stated otherwise.
    {\footnotesize Legend:
      (RC) reduced convolution;
      (FC) full convolution;
      (TD) constrained triangulation decomposition;
      (VD) vertical decomposition;
      (SD) small-side angle-bisector decomposition;
      (OD) optimal convex decomposition;
      (HD) Hertel-Mehlhorn decomposition;
      (GD) Greene decomposition;
      (BL) Behar and Lien's reduced convolution.}}
      \label{fig:plots}
\end{figure}

\FloatBarrier
Note that the polygons used for the benchmarks above do not represent
a real-world case.  Instead, the complex shapes rather constitute a
worst-case, as most segments intersections are inside the Minkowski
sum anyway.  For a more realistic scenario, consider a text, which we
want to offset (for example, for printing stickers). In
Figure~\ref{fig:m}, we show the running times of the methods available
for simple polygons when calculating the Minkowski sum of a letter “M”
(Figure~\ref{fig:letter}) with a varying amount of vertices and a
circle with 128 vertices.  In Figure~\ref{fig:a}, we show the running
times of the methods available for general polygons when calculating
the Minkowski sum of a letter “A” and the same circle. 
For both letters, our implementation of the reduced convolution 
is at least 5 times faster than all other methods.

\section{Conclusion}
\label{sec:conclusion}

All new implementations introduced in this work 
will be available as part of the next public release of
CGAL, which now also supports polygons with holes. 
The  decomposition approaches that handle only simple 
polygons outperform the new reduced convolution method 
(which, naturally handles also simple polygons) 
for instances of random simple polygons with more than 150 vertices.
However, these rather chaotic polygons somewhat constitute 
the worst case scenario for the reduced convolution method. 
In all other scenarios, the reduced convolution method 
with hole filter outperforms all other methods by a 
factor of at least 5. 
Consequently, this is the new default method of CGAL to 
compute Minkowski sums for simple polygons as well as 
polygons with holes.


{\small
\bibliographystyle{abbrv}
\bibliography{abrev-short,bibliography}}

\begin{thebibliography}{10}

\bibitem{afg-pdecm-02}
P.~K. Agarwal, E.~Flato, and D.~Halperin.
\newblock Polygon decomposition for efficient construction of {M}inkowski sums.
\newblock {\em Comput. Geom. Theory Appl.}, 21:39--61, 2002.

\bibitem{bl-frmsu-11}
E.~Behar and J.-M. Lien.
\newblock Fast and robust 2{D} {M}inkowski sum using reduced convolution.
\newblock In {\em Proc. IEEE Conf. on Intelligent Robots and Systems}, 2011.

\bibitem{b-tmg-04}
M.~Bern.
\newblock Triangulations and mesh generation.
\newblock In J.~E. Goodman and J.~O'Rourke, editors, {\em Handb. Disc. Comput.
  Geom.}, chapter~25, pp. 529--582. Chapman \& Hall/CRC, Boca Raton, FL, 2nd
  edition, 2004.

\bibitem{art-gallery-instances-page}
M.~C. Couto, P.~J. de~Rezende, and C.~C. de~Souza.
\newblock Instances for the {Art Gallery Problem}, 2009.
\newblock http://www.ic.unicamp.br/$\sim$cid/\break
  Problem-instances/Art-Gallery.

\bibitem{ek-osm-99}
G.~Elber and M.-S. Kim.
\newblock Offsets, sweeps, and {M}inkowski sums.
\newblock {\em Comput. Aided Design}, 31(3):163, 1999.

\bibitem{fh-eecms-06}
E.~Fogel and D.~Halperin.
\newblock Exact and efficient construction of {Minkowski} sums of convex
  polyhedra with applications.
\newblock In {\em Proc. 8{th} Workshop Alg. Eng. Experiments}, pp. 3--15, 2006.

\bibitem{fh-papit-13}
E.~Fogel and D.~Halperin.
\newblock Polyhedral assembly partitioning with infinite translations or the
  importance of being exact.
\newblock {\em IEEE Trans. on Automation Sci. and Eng.}, 10:227--241, 2013.

\bibitem{fhw-caass-12}
E.~Fogel, D.~Halperin, and R.~Wein.
\newblock {\em {\slshape\scshape Cgal} Arrangements and Their Applications, A
  Step by Step Guide}.
\newblock Springer, Berlin Heidelberg, Germany, 2012.

\bibitem{g-ucfms-93}
P.~K. Ghosh.
\newblock A unified computational framework for {M}inkowski operations.
\newblock {\em Comput. \& Graphics}, 17(4):357--378, 1993.

\bibitem{grs-kfcg-83}
L.~J. Guibas, L.~Ramshaw, and J.~Stolfi.
\newblock A kinetic framework for computational geometry.
\newblock In {\em Proc. 24{th} Annu. IEEE Symp. Found. Comput. Sci.}, pp.
  100--111, 1983.

\bibitem{h-emsoe-09}
P.~Hachenberger.
\newblock Exact {Minkowksi} sums of polyhedra and exact and efficient
  decomposition of polyhedra into convex pieces.
\newblock {\em Algorithmica}, 55(2):329--345, 2009.

\bibitem{h-a-04}
D.~Halperin.
\newblock Arrangements.
\newblock In J.~E. Goodman and J.~O'Rourke, editors, {\em Handb. Disc. Comput.
  Geom.}, chapter~24, pp. 529--562. Chapman \& Hall/CRC, Boca Raton, FL, 2nd
  edition, 2004.

\bibitem{hmsvz-csaio-99}
E.~E. Hartquist, J.~Menon, K.~Suresh, H.~B. Voelcker, and J.~Zagajac.
\newblock A computing strategy for applications involving offsets, sweeps, and
  {M}inkowski operations.
\newblock {\em Comput. Aided Design}, 31:175--183, 1999.

\bibitem{kos-cmsrp-91}
A.~Kaul, M.~O'Connor, and V.~Srinivasan.
\newblock Computing {M}inkowski sums of regular polygons.
\newblock In {\em Proc. 3{rd} Canadian Conf. on Comput. Geom.}, pp. 74--77,
  1991.

\bibitem{k-ccsou-93}
L.~E. Kavraki.
\newblock Computation of configuration-space obstacles using the fast fourier
  transform.
\newblock In {\em Proc. IEEE Int. Conf. on Robotics \& Automation}, pp.
  255--261, 1993.

\bibitem{l-rmp-91}
J.-C. Latombe.
\newblock {\em Robot Motion Planning}.
\newblock Kluwer Academic Publishers, Norwell, Massachusetts, 1991.

\bibitem{ml-gvamc-10}
W.~Li and S.~McMains.
\newblock A {GPU}-based voxelization approach to {3D} {M}inkowski sum
  computation.
\newblock In {\em Proc. 2010 ACM Symp. Solid Phys. Model.}, pp. 31--40. ACM
  Press, 2010.

\bibitem{l-smcmb-09}
J.-M. Lien.
\newblock A simple method for computing {M}inkowski sum boundary in {3D} using
  collision detection.
\newblock In H.~Choset, M.~Morales, and T.~D. Murphey, editors, {\em Alg.
  Foundations of Robotics {VIII}}, volume~57 of {\em Springer Tracts in
  Advanced Robotics}, pp. 401--415. Springer, 2009.

\bibitem{lm-cpq-04}
M.~C. Lin and D.~Manocha.
\newblock Collision and proximity queries.
\newblock In J.~E. Goodman and J.~O'Rourke, editors, {\em Handb. Disc. Comput.
  Geom.}, chapter~35, pp. 787--807. Chapman \& Hall/CRC, Boca Raton, FL, 2nd
  edition, 2004.

\bibitem{l-spcsa-83}
T.~Lozano-P{\'e}rez.
\newblock Spatial planning: {A} configuration space approach.
\newblock {\em IEEE Trans. on Comput.}, C-32:108--120, 1983.

\bibitem{ms-mcms-07}
V.~Milenkovic and E.~Sacks.
\newblock A monotonic convolution for {M}inkowski sums.
\newblock {\em Int. J. of Comput. Geom. Appl.}, 17(4):383--396, 2007.

\bibitem{r-ttcta-98}
G.~Ramkumar.
\newblock {\em Tracings and Their Convolutions: Theory and Application}.
\newblock Phd thesis, Stanford, California, 1998.

\bibitem{cgal:eb-15}
{The \textsc{Cgal} Project}.
\newblock {\em {\slshape\scshape Cgal} User and Reference Manual}.
\newblock \textsc{Cgal} Editorial Board, {4.6} edition, 2015.
\newblock \url{http://doc.cgal.org/latest/Manual/index.html}.

\bibitem{vm-amsapm-06}
G.~Varadhan and D.~Manocha.
\newblock Accurate {Minkowski} sum approximation of polyhedral models.
\newblock {\em Graphical Models}, 68(4):343--355, 2006.

\bibitem{w-eecpm-06}
R.~Wein.
\newblock Exact and efficient construction of planar {Minkowski} sums using the
  convolution method.
\newblock In {\em Proc. 14{th} Annu. Eur. Symp. Alg.}, pp. 829--840, 2006.

\bibitem{cgal:w-rms2-15}
R.~Wein.
\newblock {2D} {Minkowski} sums.
\newblock In {\em {\slshape\scshape Cgal} User and Reference Manual}.
  \textsc{Cgal} Editorial Board, {4.6} edition, 2015.
\newblock
  \url{http://doc.cgal.org/latest/Manual/packages.html#Pkg{Minkowski}Sum2Summary}.

\bibitem{cgal:wfzh-a2-15}
R.~Wein, E.~Berberich, E.~Fogel, D.~Halperin, M.~Hemmer, O.~Salzman, and
  B.~Zukerman.
\newblock {2D} arrangements.
\newblock In {\em {\slshape\scshape Cgal} User and Reference Manual}.
  \textsc{Cgal} Editorial Board, {4.6} edition, 2015.
\newblock
  \url{http://doc.cgal.org/latest/Manual/packages.html#PkgArrangement2Summary}.

\bibitem{cgal:y-t-15}
M.~Yvinec.
\newblock {2D} triangulations.
\newblock In {\em {\slshape\scshape Cgal} User and Reference Manual}.
  \textsc{Cgal} Editorial Board, {4.6} edition, 2015.
\newblock
  \url{http://doc.cgal.org/latest/Manual/packages.html#PkgTriangulation2Summary}.

\end{thebibliography}

\end{document}